\title{Tribes Is Hard in the Message Passing
  Model\footnote{A. Chattopadhyay is partially supported by a
    Ramanujan Fellowship of the DST and S. Mukhopadhyay is supported
    by a TCS Fellowship.}} 
\author{Arkadev Chattopadhyay \\\bigskip Sagnik Mukhopadhyay \\Tata Institute
  of Fundamental Research, Mumbai\\ \texttt{\{arkadev.c\ |\ sagnik\} @tifr.res.in}}
\newcommand{\h}{\mathcal{H}}
\newcommand{\bool}{\{0,1\}}
\newcommand{\tribes}{\mathsf{Tribes}_{m,\ell}}
\newcommand{\disj}{\mathsf{Disj}_\ell}
\newcommand{\Disj}{\mathsf{Disj}_2}
\newcommand{\AND}{\mathsf{AND}_k}
\newcommand{\I}{\mathbb{I}}
\newtheorem{theorem}{Theorem}
\newtheorem{definition}[theorem]{Definition}
\newtheorem{remark}[]{Remark}
\newtheorem{fact}[theorem]{Fact}
\newtheorem{observation}[theorem]{Observation}
\newtheorem{lemma}[theorem]{Lemma}
\newtheorem{claim}[theorem]{Claim}
\newcommand{\M}{\mathbf{M}}
\newcommand{\Z}{\mathbf{Z}}
\newcommand{\W}{\mathbf{W}}
\newcommand{\X}{\mathbf{X}}
\newcommand{\Y}{\mathbf{Y}}
\newcommand{\U}{\mathbf{U}}
\newcommand{\N}{\mathbf{N}}
\newcommand{\Ss}{\mathbf{S}}
\newcommand{\V}{\mathbf{V}}
\newcommand{\J}{\mathbf{J}}
\newcommand{\Pp}{\mathbf{P}}
\newcommand{\Qq}{\mathbf{Q}}
\newcommand{\Rr}{\mathbf{R}}
\newcommand{\Tt}{\mathbf{T}}
\begin{document}

\maketitle

\begin{abstract}
  We consider the point-to-point message passing model of
  communication in which there are $k$ processors with individual
  private inputs, each $n$-bit long. Each processor is located at the
  node of an underlying undirected graph and has access to private
  random coins. An edge of the graph is a private channel of
  communication between its endpoints. The processors have to compute
  a given function of all their inputs by communicating along these
  channels. While this model has been widely used in distributed
  computing, strong lower bounds on the amount of communication needed
  to compute simple functions have just begun to appear.

  In this work, we prove a tight lower bound of $\Omega(kn)$ on the
  communication needed for computing the Tribes function, when the
  underlying graph is a star of $k+1$ nodes that has $k$ leaves with
  inputs and a center with no input. Lower bound on this topology
  easily implies comparable bounds for others.  Our lower bounds are
  obtained by building upon the recent information theoretic
  techniques of Braverman et.al
  (\cite{DBLP:conf/focs/BravermanEOPV13}, FOCS'13) and combining it
  with the earlier work of Jayram, Kumar and Sivakumar
  (\cite{DBLP:conf/stoc/JayramKS03}, STOC'03). This approach yields
  information complexity bounds that is of independent interest.
 
\end{abstract}

\section{Introduction} \label{sec:introduction} The classical model of
2-party communication was introduced in the seminal work of
Yao\cite{DBLP:conf/stoc/Yao79}, motivated by problems of distributed
computing. This model has proved to be of fundamental importance (see
the book by Kushilevitz and Nisan \cite{DBLP:books/daglib/0011756})
and forms the core of the vibrant subject of communication complexity.
It is fair to say that the wide applicability of this model to
different areas of computer science cannot be over-emphasized.

However, a commonly encountered situation in distributed computing is
one where there are multiple processors, each holding a private input,
that are connected by an underlying communication graph. An edge of
the graph corresponds to a private channel of communication between
the endpoints. There are $k$ processors located on distinct nodes of
the graph that want to compute a function of their joint inputs. In
such a networked scenario, a very natural question is to understand
how much total communication is needed to get the function
computed. The classical 2-party model is just a special case where the
graph is an edge connecting two processors.

Among others, this model has also been called the Number-in-hand
multiparty point-to-point message passing model of
communication. Apart from distributed computing, this model is used in
secure multiparty computation. The study of the communication cost in
the model was most likely introduced by Dolev and Feder
\cite{DBLP:journals/siamcomp/DolevF92} and further worked on by Duris
and Rolim \cite{DBLP:journals/jcss/DurisR98}. These early works
focused on deterministic communication. There has been renewed
interest in the model because it arguably better captures many of
today's networks that is studied in various distributed models: models for
map-reduce \cite{DBLP:conf/soda/KarloffSV10,
  DBLP:conf/isaac/GoodrichSZ11}, massively parallel model for
computing conjunctive queries \cite{DBLP:conf/pods/BeameKS13,
  DBLP:conf/pods/KoutrisS11}, distributed models of learning
\cite{DBLP:journals/jmlr/BalcanBFM12} and in core distributed
computing \cite{DBLP:conf/podc/DruckerKO12}. However, there were no
known systematic techniques of proving lower bounds on the cost of
randomized communication protocols that exploited the
\emph{non-broadcast} nature of the \emph{private channels} of
communication in the model. Recently, there has been a flurry of work
developing new techniques for proving lower bounds on
communication. Phillips, Verbin and Zhang
\cite{DBLP:conf/soda/PhillipsVZ12} introduced the method of
symmetrization to prove strong bounds for a variety of
functions. Their technique was further developed in the works of
Woodruff and Zhang
\cite{DBLP:conf/stoc/WoodruffZ12,DBLP:conf/wdag/WoodruffZ13,DBLP:conf/soda/WoodruffZ14}.

All these works considered the co-ordinator model, a special case,
that was introduced in the early work of
\cite{DBLP:journals/siamcomp/DolevF92}. In the co-ordinator model, the
underlying graph has the star topology with $k+1$ nodes.  There are
$k$ leaves, each holding an $n$-bit input. Each of the $k$ leaf-nodes
is connected to the center of the star. The node at the center has no
input and is called the co-ordinator. The following two simple
observations about the model will be relevant for this work: every
function can be trivially computed using $O(nk)$ bits of communication
by having each of the $k$ players send their inputs to the
co-ordinator who then outputs the answer. It is also easily observed
that the co-ordinator model can simulate a communication protocol on
an arbitrary topology having $k$ nodes with at most a $\log k$ factor
blow-up in the total communication cost.

A key lesson learnt from our experience with the classical 2-party
model is that an excellent indicator of our understanding of a model
is our ability to prove lower bounds for the widely known
Set-Disjointness problem in the model. Indeed, as surveyed in
\cite{DBLP:journals/sigact/ChattopadhyayP10}, several new and
fundamental lower bound techniques have emerged from efforts to prove
lower bounds for this function. Further, the lower bound for
Set-Disjointness, is what drives many of the applications of
communication complexity to other domains. While the symmetrization
technique of Phillips et.al and its refinements by Woodruff and Zhang
proved several lower bounds, no strong lower bounds for
Set-Disjointness were known until recently in the $k$-processor
co-ordinator model. In this setting, the relevant definition of
Set-Disjointness is the natural generalization of its 2-party
definition: view the $n$-bit inputs of the $k$ processors as a $ k
\times n$ Boolean matrix where the $i$th row corresponds to the
Processor $i$'s input. The Set-Disjointness function outputs 1 iff
there exists a column of this matrix that has no zeroes.

In an important development, Braverman
et.al. \cite{DBLP:conf/focs/BravermanEOPV13} proved a tight
$\Omega(kn)$ lower bound for Set-Disjointness in the co-ordinator
model. Their approach is to build up new information complexity tools
for this model that is a significant generalization of the 2-party
technique of Bar-Yossef
et.al. \cite{DBLP:journals/jcss/Bar-YossefJKS04}. In this work, we
further develop this information complexity method for the
co-ordinator model by considering another natural and important
function, known as Tribes$_{m,\ell}$. In this function, the $n$-bit
input to each processor is grouped into $m$ blocks, each of length
$\ell$. Thus, the overall $k \times n$ input matrix splits up into $m$
sub-matrices $A_1,\ldots,A_m$, each of dimension $k \times
\ell$. Tribes outputs 1 iff the Set-Disjointness function outputs 1 on
each sub-matrix $A_i$. This obviously imparts a direct-sum flavor to
the problem of determining the complexity of the Tribes function in
the following sense: a naive protocol will solve Tribes by
simultaneously running an optimal protocol for Set-Disjointness on
each of the $m$ instances $A_1,\ldots,A_m$. Is this strategy optimal?

This question was answered in the affirmative for the 2-party model by
Jayram, Kumar and Sivakumar \cite{DBLP:conf/stoc/JayramKS03} when they
proved an $\Omega(n)$ lower bound on the randomized communication
complexity of the Tribes function. Their work delicately extended the
information theoretic tools of Bar-Yossef et.al
\cite{DBLP:journals/jcss/Bar-YossefJKS04}. Interestingly, it also
exhibited the power of the information complexity approach. There was
no other known technique to establish a tight lower bound on the
Tribes function\footnote{This is not surprising. Two other successful
  techniques, the discrepancy and the corruption method, both yield
  lower bounds on the non-deterministic complexity. On the other
  hand, Tribes and its complement, on $n$-bit inputs, both have only
  $\sqrt{n}$ non-deterministic complexity.}.

In this work, we show that the naive strategy for solving Tribes is
optimal also in the co-ordinator model:

\begin{theorem} \label{thm:tribes} In the $k$-processor co-ordinator
  model, every bounded error randomized protocol solving the
  $\tribes$ function, has communication cost
  $\Omega\big(m\ell k\big)$, for every $k \ge 2$.
\end{theorem}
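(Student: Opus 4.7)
The plan is to combine the direct-sum-style information complexity framework of Jayram, Kumar and Sivakumar \cite{DBLP:conf/stoc/JayramKS03} with the coordinator-model information complexity tools of Braverman et al.\ \cite{DBLP:conf/focs/BravermanEOPV13}. The overall strategy views $\tribes$ as $\AND_m$ of $\disj$ and each $\disj$ as an $\mathsf{OR}_\ell$ of $\AND_k$; the proof will peel off the two outer layers by direct sum and then establish an $\Omega(k)$ lower bound on the information complexity of the innermost primitive $\AND_k$ in the coordinator model.

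First I would fix a hard distribution $\mu$ on the $k \times m\ell$ input matrix which, following JKS, is a product across the $m$ blocks and on each block $A_j$ is the ``collision-free'' distribution: with probability close to $1$ exactly one column is promised to \emph{not} be $\mathbf{1}^k$ (so that $\disj = 0$ on that block), and auxiliary random variables $\mathbf{D}_j$ (a designated coordinate index plus the inputs of all but one player on it) are used so that, conditioned on $\mathbf{D}_j$, the remaining row is independent of the other rows. Analogous to JKS, this lets us define a \emph{conditional} information cost $\mathrm{IC}_\mu(\pi \mid \mathbf{D}) = I(\mathbf{X}_1, \ldots, \mathbf{X}_k ; \Pi \mid \mathbf{D})$ that is simultaneously compatible with non-product marginals \emph{and} with the coordinator-model protocol transcript $\Pi$ (where $\Pi$ records the concatenation of messages on every channel).

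Next, two direct-sum reductions are carried out. The outer one uses that Tribes is an AND over blocks: if a protocol for $\tribes$ has information cost $C$ under $\mu$, then by the chain rule and the product structure across blocks one gets $C \ge m \cdot \mathrm{IC}_{\nu}(\disj \mid \mathbf{D}_1)$ where $\nu$ is the per-block marginal. Because a randomized protocol need only output the right value with high probability, the standard JKS trick of embedding a single hard $\disj$ instance into a Tribes instance (filling other blocks by private sampling consistent with $\mu$ conditioned on outputting $1$) goes through in the coordinator model too, since each player can do this sampling locally. The inner reduction, from $\disj$ on $\ell$ coordinates to $\AND_k$ on a single coordinate, is the more delicate part of the JKS argument: one uses the collision-free structure to argue that in expectation over the special column, the transcript restricted to that column behaves like a transcript for $\AND_k$ on a hard distribution $\eta$ for the AND primitive, yielding $\mathrm{IC}_{\nu}(\disj \mid \mathbf{D}_1) \ge \ell \cdot \mathrm{IC}_\eta(\AND_k \mid \mathbf{E})$ for appropriate conditioning $\mathbf{E}$.

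The main obstacle will be the final step: showing $\mathrm{IC}_\eta(\AND_k \mid \mathbf{E}) = \Omega(k)$ in the coordinator model. This is where the 2-party JKS argument is insufficient, since in the 2-party world one only needs a constant lower bound here. The idea is to adapt the Braverman et al.\ technique: condition further so that in the hard distribution exactly one player $i^\star$ holds a bit that is uniform over $\{0,1\}$ while everybody else's bit is fixed to $1$, and show that the coordinator cannot distinguish the value of the $i^\star$-th bit without $\Omega(1)$ information being revealed on player $i^\star$'s private channel; since $i^\star$ is itself uniform over $[k]$, summing over players gives the desired $\Omega(k)$ bound. The technical difficulty is that the coordinator-model transcript is not a single broadcast object, so one has to argue that information on different channels is essentially additive under the embedding; this is precisely the step where the Braverman et al.\ ``symmetrization meets information complexity'' argument must be invoked and reconciled with the JKS collision-free conditioning. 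Putting everything together, $C \ge m \cdot \ell \cdot \Omega(k) = \Omega(m \ell k)$, proving Theorem~\ref{thm:tribes}.
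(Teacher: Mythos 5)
Your high-level plan (JKS-style peeling of $\tribes$ combined with the coordinator-model tools of Braverman et al.) is the right family of ideas, but two of your concrete steps break down, and they are precisely the points where the paper has to do something different. First, your information cost measure $\I[\X_1,\dots,\X_k:\Pi\mid \mathbf{D}]$ — the information the \emph{whole} transcript reveals about the \emph{whole} input, conditioned on auxiliary variables — cannot give $\Omega(m\ell k)$ in the coordinator model. There is a correct protocol (the coordinator probes the players of a column one by one until it sees a $0$) whose transcript reveals only $O(\log k)$ bits of information per column about the input under distributions of this type, even though its communication is large; so any bound provable through this single-term measure is capped at roughly $O(m\ell\log k)$. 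This is exactly the obstacle noted by Phillips et al., and the paper's measure is instead a sum, over processors $i$, of two terms involving the \emph{local view} $\Pi^i$: $\I[\X^i:\Pi^i\mid\M,\Z,\W]+\I[\M:\Pi^i\mid\X^i,\Z,\W]$, where the second term charges each processor for learning the switch variable $\M$. Your closing remark about channel-wise additivity gestures at this, but the framework you actually set up does not contain the second term, and without it the innermost $\Omega(k)$ step cannot be carried out (also, the additivity idea is from Braverman et al.'s information-complexity argument, not from symmetrization, which is Phillips--Verbin--Zhang).

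Second, your inner direct sum $\mathrm{IC}(\disj)\ge \ell\cdot\mathrm{IC}(\AND)$ is not available under the distribution you need for the outer step. The outer AND-over-blocks reduction requires the per-block distribution to be supported on inputs where $\disj=1$ (a collapsing distribution; your description of the block distribution, with ``exactly one column not all-ones so that $\disj=0$'', is both internally inconsistent with the paper's convention and would kill the outer embedding). But once every block contains a planted all-ones column, the $\ell$ column-ANDs cannot all be $0$, so embedding a single $\AND$ instance into one column does not compute $\AND$: the planted column forces $\disj=1$ regardless of the embedded input. The paper resolves exactly this tension by introducing the \emph{partial} information cost $\mathsf{PIC}$ (dropping the $\W$-th coordinate from $\M$ and $\X^i$), proving a direct sum only over the remaining $\ell-1$ columns down to two-column disjointness, $\mathsf{PIC}_{\zeta_\ell}(\disj)\ge(\ell-1)\cdot\mathsf{PIC}_{\zeta_2}(\Disj)$, and then proving $\mathsf{PIC}_{\zeta_2}(\Disj)=\Omega(k)$ directly via Hellinger-distance arguments (cut-and-paste, Pythagorean, diagonal and global-to-local lemmas), crucially using correctness of the protocol on inputs outside the support, e.g.\ pairs like $(\bar{e}\,\bar{e}_{i,j})$ versus $(\bar{e}_j\,\bar{e}_i)$ on which $\Disj$ differs. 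Your proposal would need both of these missing ingredients — the two-term per-processor cost and a replacement for the impossible $\disj\to\AND$ direct sum — before the final multiplication $m\cdot\ell\cdot\Omega(k)$ is justified.
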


We prove this by extending and simplifying the information complexity
approach of \cite{DBLP:conf/focs/BravermanEOPV13} and the earlier work
of \cite{DBLP:conf/stoc/JayramKS03}. It is worth noting that our
bounds in Theorem~\ref{thm:tribes} hold for all values of $k$. In
particular, this also yields a lower bound for Set-Disjointness for
all values of $k$. The earlier bound of Braverman et.al. only worked
if $k = \Omega(\log n)$.


\section{Overview \& comparison with previous work}
\label{sec:overview}
We first provide a quick overview of our techniques and
contributions. We follow this up with a more detailed description,
elaborating on the main steps of the argument.

\smallskip
\noindent {\bf Brief Summary:} Recall that the $\tribes$ function can
be written as an $m$-fold \textsf{AND} of $\disj$ instances. One
possible way to show that $\tribes$ is hard in message-passing model
is to show that any protocol evaluating $\tribes$ must evaluate all
the the $\disj$ instances. This suffices to argue that $\tribes$ is
$m$ times as hard as $\disj$. By now it is well known that information
complexity provides a convenient framework to realize such direct sum
arguments. In order to do so, one needs to define a distribution on
inputs that is entirely supported on the ones of the $m$
Set-Disjointness instances of Tribes. This was the general strategy of
Jayram et.al.\cite{DBLP:conf/stoc/JayramKS03} in the 2-party
context. However, the first problem one encounters is to define an
appropriate hard distribution and a right notion of information cost
such that Disjointness has high information cost of $\Omega(k\ell)$
under that distribution \emph{in the co-ordinator model}. This turns
out to be a delicate and involved step. Various natural information
costs do not work as observed by Phillips
et.al.\cite{DBLP:conf/soda/PhillipsVZ12}. Here, we are helped by the
work of Braverman et.al.\cite{DBLP:conf/focs/BravermanEOPV13}. They
come up with an appropriate distribution $\tau$ and an information cost
measure $\mathsf{IC}^0$. However, we face two problems in using
them. The first is that $\tau$ happens to be (almost) entirely
supported on the zeroes of Set-Disjointness. Taking ideas from
\cite{DBLP:conf/stoc/JayramKS03}, we modify $\tau$ to get a
distribution $\mu$ supported exclusively on the ones of
Set-Disjointness. Roughly speaking, to sample from $\mu$, we first
sample from $\tau$ and then pick a random column of the sampled input
and force it to all ones. Intuitively, the idea is that the all ones
column is being well hidden at a random spot. If $\tau$ was hard,
$\mu$ should also remain hard. The second problem is to appropriately
modify the information cost measure $\mathsf{IC}^0$ to $\mathsf{IC}$ so
that it yields high information complexity under $\mu$. Here, we use
an idea of \cite{DBLP:conf/stoc/JayramKS03}.

However, proving that $\mathsf{IC}$ is high for protocols when inputs
are sampled according to $\mu$ raises new technical challenges. The
first challenge is to prove a direct sum result on the information
complexity of protocols as measured by $\mathsf{IC}$. Implementing
this step is a novelty of this work, where we show roughly that
$\mathsf{IC}\big(\disj\big)$ is at least
$\Omega(\ell\cdot\mathsf{IC}\big(\Disj\big))$. For showing this, we
introduce a new information measure, $\mathsf{PIC}(f)$ which is
a lower bound on $\mathsf{IC}(f)$ and will be explained in relevat
section. The final challenge is to prove that
$\mathsf{IC}\big(\Disj\big)$ is $\Omega(k)$.  We again do that by
first simplifying some of the lemmas of
\cite{DBLP:conf/focs/BravermanEOPV13} and extending them using some
ideas from the work of \cite{DBLP:conf/stoc/JayramKS03}.

\smallskip
\noindent {\bf More Detailed Account:} Among the many possible ways to
define information cost of a protocol, the definition we work with
stems from the inherent structure of the communication model.  As
evident from the previous discussion, in the model of communication we
are interested in, the co-ordinator can see the whole transcript of
the protocol but cannot see the inputs. On the other hand, the
processors can only see a local view of the transcript - the message
that is passed to them and the message they send - along with their
respective inputs. From the point of view of the co-ordinator, who has
no input, the information revealed by the transcript about the input
can be expressed by $\I[X:\Pi(X)]$. This is small for the protocol
where the co-ordinator goes around probing each player on each
coordinate to see whether any player has $0$ in it and gives up once
she finds such a player.(We call it
Protocol \textsf{A}). It is not hard to see that the information cost
can only be as high as $O(n\log k)$ for protocol \textsf{A}. A
relevant information cost measure from the point of view of processor
$i$ is $\I[X^{-i}:\Pi^i(X)\ |\ X^i]$ which measures how much
information processor $i$ learns about other inputs from the
transcript. It turns out that this information cost is also very small
for the protocol where all the processors send their respective inputs
to the co-ordinator (We call this protocol as protocol
\textsf{B}). Here $\I[X^{-i}:\Pi^i(X)\ |\ X^i]$ is $0$ for all
$i$. What is worth noticing is that in both protocols, if we consider
the sum of the two information costs, i.e., $\I[X:\Pi(X)] + \sum_i
\I[X^{-i}:\Pi^i(X)\ |\ X^i]$, it is $\Omega(nk)$ which is the kind of
bound we are aiming for.

This cost trade-off was first observed in
\cite{DBLP:conf/soda/PhillipsVZ12} but they were unable to prove a
lower bound for $\disj$ in this model of communication. Braverman et
al \cite{DBLP:conf/focs/BravermanEOPV13} solved this problem by coming
up with the following notion of information complexity. Let
$(\mathbf{X, M, Z})$ be distributed jointly according to some
distribution $\tau$. The information cost of a protocol $\Pi$ with
respect to $\tau$ is defined as,

\begin{equation}
  \mathsf{IC}^0_\tau(\Pi) = \sum_{i \in
    [k]}\left[\underset{\tau}{\I}[\mathbf{X}^i:\Pi^i(\mathbf{X})\ |\
    \mathbf{M, Z}] + \underset{\tau}{\I}[\mathbf{M}:\Pi^i(\mathbf{X})\ |\
    \mathbf{X}^i,\mathbf{ Z}] \right]
\end{equation}

Conditioning on the auxiliary random variables $\M$ and $\Z$ serves
the following purpose: Even though the distribution $\tau$ is a
non-product distribution, it can be thought of as a convex combination
of product distributions, one for each specific values of $\M$ and
$\Z$. It is well-known by now that such convex combination facilitates
proving direct-sum like result. 

The desired properties of the distribution $\tau$ are as
follows. First, the distribution should have enough entropy to make it
hard for the players to encode their inputs cheaply and send it across
to the co-ordinator. Such an encoding is attempted in protocol
\textsf{B}. Second, the distribution should be supported on inputs
which have only a few $0$'s in each column of $\disj$. This makes sure
that the co-ordinator has to probe $\Omega(k)$ processors in each
column before he finds a $0$ in that column. This attempt of probing
was undertaken by the co-ordinator in protocol \textsf{A}. The first
property can be individually satisfied by setting each processor's
input to be $0$ or $1$ with equal probability in each column. The
second property can also be individually satisfied by taking a random
processor for each column and giving it a $0$ and giving $1$ to rest
of the processors as their inputs. Let $\Z_j$ denote the processor
whose bit was fixed to $0$ in column $j$. The hard distribution for
$\disj$ is a convex combination of these two distributions. The way it
is done is by setting a Bernoulli random variable $\M_j$ for each of
the column $j$ which acts as a switch, i.e., if $\M_j=0$ the input to
the column $j$ is sampled from the first distribution, otherwise it is
sampled from the second distribution. $\M_j$ takes value $0$ with
probability $2/3$. We define $\M=\langle \M_1,\dots,\M_\ell \rangle$ and
$\Z=\langle \Z_1,\dots,\Z_\ell \rangle$.

At this point it is interesting to go back to the definition of
$\mathsf{IC}^0$ and try to see the implication of each term in the
definition. For the coordinator,
$\sum_i\underset{\tau}{\I}[\mathbf{X}^i:\Pi^i(\mathbf{X})\ |\
\mathbf{M, Z}]$ represents the amount of information revealed about
the inputs of the processors by the transcript. For convenience, we
can assume that $\M$ is with co-ordinator. We can do this without loss
of generality as the co-ordinator can sample $O(\log k)$ inputs from
column $j$ and conclude the value of $\M_j$ from it, for any $j$. This
amount of communication is okay for us as we are trying to show a
lower bound of $\Omega(nk)$. However note that we cannot assume that
the processors have the knowledge of $\M$. Had that been the
situation, the processors would have employed protocol \textsf{A} or
protocol \textsf{B} in column $j$ depending on the value of the
$\M_j$. The value of $\I[\mathbf{X}^i:\Pi^i(\mathbf{X})\ |\ \mathbf{M,
  Z}]$, in this protocol, would have been small. So we need to make
sure that we charge the processors for their effort to know the value
of $\M$. This is taken care by the second term in the definition of
$\mathsf{IC}^0$ i.e.,
$\underset{\mu}{\I}[\mathbf{M}:\Pi^i(\mathbf{X})\ |\
\mathbf{X}^i,\mathbf{ Z}]$. Braverman et al
\cite{DBLP:conf/focs/BravermanEOPV13} used this notion of information
complexity to achieve the $\Omega(\ell k)$ lower bound for the
information cost of $\disj$ with respect to the hard distribution.

\bigskip

As mentioned before, we, however, need the hard distribution $\zeta$
for $\tribes$ to be entirely supported on $1$s of $\disj$. But the
distribution $\tau$ described above is supported on $0$s of
$\disj$. Here we borrow ideas from \cite{DBLP:conf/stoc/JayramKS03}
and design a distribution $\mu$ by selecting a random column for the
$\disj$ instances and planting an all $1$ input in it. We denote the
random co-ordinate by $\W$. It is easy to verify that $\mu$ is a
distribution supported in $1$s of $\disj$. We set the hard
distribution for $\tribes$ to be an $m$-fold product distribution
$\zeta = \mu^m$ denoted by the random variables $\langle \bar{\X},
\bar{\M}, \bar{\Z}, \bar{\W} \rangle$. It is to be noted that a
correct protocol should work well for all inputs, not necessarily for
the inputs coming from the distribution $\zeta$. This property will be
crucially used in later part of the proof. The modification of the
input distribution from $\tau$ to $\mu$ and subsequently to $\zeta$
calls for changing the definition of the information complexity to
suit our purpose. We define information complexity as follows which we
will use in this paper.

\begin{definition}
  Let $(\bar{\X}, \bar{\M}, \bar{\Z}, \bar{\W})$ be distributed jointly according to
  $\zeta$. The information cost of a protocol $\Pi$ with $k$ processors in
  NIH point-to-point coordinator model with respect to $\zeta$ is
  defined as,
  \begin{equation}
    \mathsf{IC}_\zeta(\Pi) = \sum_{i \in
      [k]}\left[\underset{\zeta}{\I}[\bar{\X}^i:\Pi^i(\bar{\X})\ |\
      \bar{\M}, \bar{\Z}, \bar{\W}] + \underset{\zeta}{\I}[\bar{\M}:\Pi^i(\bar{\X})\ |\
      \bar{\X}^i,\bar{\Z}, \bar{\W}] \right].
  \end{equation}
For a function $f: \mathbf{X} \rightarrow \mathcal{R}$, the
information complexity of the function is defined as,
\begin{equation}
  \mathsf{IC}_{\zeta,\delta}(f) = \inf_\Pi \mathsf{IC}_\mu(\Pi),  
\end{equation}
where the infimum is taken over all $\delta$-error protocol $\Pi$ for
$f$.
\end{definition}

By doing this, we are able to bound the information complexity of
$\tribes$ as $m$-times that of $\disj$. Although non-trivial, this
step can be accomplished by exploiting the proof techniques used in
\cite{DBLP:conf/focs/BravermanEOPV13}. The next step is to bound the
information complexity of $\disj$, which turns out to be difficult for
two reasons. First, the distribution $\mu$ is no more a $0$
distribution for $\disj$. We get around this by defining a new
information complexity measure, - which we call as partial information
complexity - to show that the partial information complexity of
$\disj$ on distribution $\mu$ is at least $(\ell-1)$-times that of
$\Disj$. This is one of the main technical contributions of our
paper. See Section \ref{sec:directsum} for details. The second hurdle
we face is bounding the information complexity of $\Disj$. Here we
combine ideas from \cite{DBLP:conf/stoc/JayramKS03,
  DBLP:conf/focs/BravermanEOPV13} to conclude that the partial
information complexity of $\Disj$ is $\Omega(k)$. This is the second
main technical contribution of this paper, which is explained in
Section \ref{sec:lower-bounding-disj}. Finally we give a simple
argument in Section \ref{sec:corr-inform-meas} to show that
$\mathsf{IC}_\zeta(\Pi)$ lower bounds the communication cost of $\Pi$
where $\Pi$ is any correct protocol for $\tribes$.

\section{Preliminaries}
\label{sec:preliminaries}



\noindent

\textbf{Communication complexity. } In this work, we are mainly
interested in multiparty communication \textit{number-in-hand}
model. In this model of computation, the input is distributed between
$k$ players $P_1,\cdots,P_k$ who jointly wish to compute a function
$f$ on the combined input by communication with each other. It can be
assumed that the players have unlimited computational power. Several
variants of this model have been studied extensively, such as, message
passing model, where each pair of players have a dedicated
communication channel and hence the players can send messages to
specific players. This is contrasted in the second variant where the
players can only broadcast their communication. The latter model in
known as shared blackboard model. In this work, we consider the
message passing model.

As mentioned before, the model which is easier to work with is the
coordinator model where, in addition to $k$ players who hold the
input, a coordinator is introduced who does not have any input but all
the communication is channelled via her, i.e., the players can only
communicate with the coordinator though the coordinator is allowed to
communication with everybody. It is easy to observe that this
coordinator model can simulate the message passing model with only a
$\log k$ overhead in the total communication cost.

We work with randomized protocol where the players have access to
private coins. (Though it might seem like that the public coin
protocol can yield better upper bound, it can be noted that all the
proofs can be modified to give the same result for public coin model.)
The standard notion of private coin randomized communication
complexity is adopted here, where we look at the worst-case
communication of the protocol when the protocol is allowed to make
only $\delta$ error (bounded away from $1/2$) on each input. Here the
probability is taken over the private coin tosses of the players. For
more details, readers are referred to \cite{DBLP:books/daglib/0011756}.

\textbf{Information theory. } We will quickly go through the
information theoretic definitions and facts we need. For a random
variable $X$ taking value in the sample space $\Omega$ according to
the distribution $p(\dot)$, the entropy of $X$, denoted as $\h(X)$, is
defined as follows.
\begin{equation}
\label{eq:44}
  \h(X) = \sum_{x \in \Omega}\Pr[X=x]\log\frac{1}{\Pr[X=x]} = \mathbf{E}_x\left[\log\frac{1}{p(x)}\right].
\end{equation}

For two random variables $X$ and $Y$, the conditional entropy of $X$
given $Y$ is defined as follows.
\begin{equation}
  \label{eq:45}
  \h(X|Y) = \mathbf{E}_{x,y}\left[\log\frac{1}{p(x|y)}\right].
\end{equation}
Informally, the entropy of a random variable measures the uncertainty
associated with it. Conditioning on another random variable, i.e.,
knowing the value that another random variable takes can only decrease
the uncertainty of the former one. This notion is captured in the
following fact that $\h(X|Y) \leq \h(X)$ where the equality is achieved
when $X$ is independent of $Y$. Given two random variables $X$ and $Y$
with joint distribution $p(x,y)$ we can talk about how much
information one random variable reveals about the other random
variable. The mutual information, as it is called, between $X$ and $Y$
is defined as follows.
\begin{equation}
  \label{eq:46}
  \I[X:Y] = \h(X) - \h(X|Y).
\end{equation}
It is to be noted that the mutual information is a symmetric quantity,
though it might not be obvious from the definition itself. From the
previous discussion, it is easy to see that the mutual information is
a non-negative quantity. As before, we can also define conditional
mutual information as below.
\begin{equation}
  \label{eq:47}
  \I[X:Y|Z] = \h(X|Z) - \h(X|Y,Z).
\end{equation}
The following chain rule of mutual information will be crucially used
in our proof.
\begin{equation}
  \label{eq:48}
  \I[X_1,\dots,X_n:Y] = \sum_{i \in [n]}\I[X_i:Y|X_{i-1},\dots,X_1].
\end{equation}
It is to be noted that the chain rule of mutual information will also
work when conditioned on random variable $Z$.
\begin{remark}
  Consider a permutation $\sigma:[n] \rightarrow [n]$. The following
  observation will be useful in our proof.
  \begin{equation}
    \label{eq:49}
    \I[X_1,\dots,X_n:Y] = \sum_{i \in [n]}\I[X_{\sigma(i)}:Y|X_{\sigma(i-1)},\dots,X_{\sigma(1)}].
  \end{equation}
\end{remark}
We will use the following lemmas
regarding mutual information.

\begin{lemma}
  \label{lemma:condn-excg}
Consider random variables $A, B, C$ and $D$. If $A$ is independent of
$B$ given $D$ then,
\begin{equation}
  \label{eq:30}
  \I[A:B,C\ |\ D] = \I[A:C\ |\ B,D],
\end{equation}
and
\begin{equation}
  \label{eq:31}
  \I[A:C\ |\ B,D] \geq \I[A:C\ |\ D].
\end{equation}
\end{lemma}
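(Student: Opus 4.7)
The plan is to derive both identities by applying the chain rule for conditional mutual information to $\I[A:B,C\ |\ D]$ in two different orderings and combining them with the conditional independence hypothesis.

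First I would establish equation (30) by expanding $\I[A:B,C\ |\ D]$ with $B$ pulled off first:
\begin{equation*}
\I[A:B,C\ |\ D] = \I[A:B\ |\ D] + \I[A:C\ |\ B,D].
\end{equation*}
By hypothesis $A$ and $B$ are independent conditioned on $D$, so $\I[A:B\ |\ D]=0$ and the claimed equality follows immediately.

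Next, for equation (31), I would re-expand the same quantity $\I[A:B,C\ |\ D]$ in the opposite order, pulling $C$ off first:
\begin{equation*}
\I[A:B,C\ |\ D] = \I[A:C\ |\ D] + \I[A:B\ |\ C,D].
\end{equation*}
The last term is non-negative by the non-negativity of (conditional) mutual information, hence $\I[A:B,C\ |\ D] \ge \I[A:C\ |\ D]$. Chaining this with equation (30) already established gives $\I[A:C\ |\ B,D] = \I[A:B,C\ |\ D] \ge \I[A:C\ |\ D]$, as required.

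There is no real obstacle here beyond invoking the conditional chain rule in both orderings; the excerpt already records that the chain rule applies under further conditioning on a random variable $Z$, which is exactly what we use when everything is conditioned on $D$. The single place where the independence hypothesis is actually used is in killing the term $\I[A:B\ |\ D]$ in the first expansion; without it, only the inequality (31) would survive, and the equality (30) would fail in general.
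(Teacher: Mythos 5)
Your proof is correct. For equation \eqref{eq:30} you argue exactly as the paper does: expand $\I[A:B,C\ |\ D]$ by the chain rule with $B$ first and kill $\I[A:B\ |\ D]$ using the hypothesis. For equation \eqref{eq:31} your route differs slightly from the paper's: you re-expand the same quantity in the other order, $\I[A:B,C\ |\ D]=\I[A:C\ |\ D]+\I[A:B\ |\ C,D]\ge \I[A:C\ |\ D]$, and chain this with \eqref{eq:30}; the paper instead works at the level of entropies, noting that $\I[A:B\ |\ D]=0$ gives $\h(A|D)=\h(A|B,D)$ and then using that conditioning on $C$ can only decrease entropy, $\h(A|B,C,D)\le \h(A|C,D)$. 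The two arguments are essentially equivalent in content (your non-negativity of $\I[A:B\ |\ C,D]$ is the same fact as the paper's conditioning-reduces-entropy step, applied after the chain rule), and yours has the small aesthetic advantage of reusing \eqref{eq:30} so that the independence hypothesis enters in exactly one place.

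One caveat about your closing remark: it is not true that, without the independence hypothesis, inequality \eqref{eq:31} ``would survive.'' Conditioning can decrease mutual information: take $A=B=C$ a uniform bit and $D$ constant, so $\I[A:C\ |\ B,D]=0$ while $\I[A:C\ |\ D]=1$. Indeed your own derivation of \eqref{eq:31} passes through the equality \eqref{eq:30}, so it too uses the hypothesis; what survives unconditionally is only $\I[A:B,C\ |\ D]\ge \I[A:C\ |\ D]$. This does not affect the correctness of your proof of the lemma as stated, but the remark should be dropped or corrected.
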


\begin{proof}
  By chain rule of mutual information, 
  \begin{align*}
    \label{eq:32}
    \I[A:B,C\ |\ D] &=  \I[A:B\ |\ D] + \I[A:C\ |\ B,D]\\
                           &=\I[A:C\ |\ B,D] \ \ \ \ \ \text{[As
                             $A\perp B|D \Rightarrow \I[A:B\ |\ D]=0$].}
  \end{align*}

For the second expression, we know $\I[A:B|D]=0$ which means $\h(A|D)
= \h(A|B,D)$. Hence,

\begin{eqnarray}
  \label{eq:33}
   \I[A:C\ |\ B,D] &=& \h(A|B,D) - \h(A|B,C,D) \cr
                          &=& \h(A|D) - \h(A|B,C,D) \cr
                          &\geq& \h(A|D) - \h(A|C,D) \cr
                          &=& \I[A:C\ |\ D]. \nonumber
\end{eqnarray}
\end{proof}

\section{Lower bound for $\tribes$ in message-passing model}
\label{sec:lower-bound-tribes-mp}

Here, in the first subsection, we will show two direct-sum results. In
the first step we bound the information complexity of $\tribes$ in
terms of that of $\disj$. It is to be noted that the proof technique
of \cite{DBLP:journals/jcss/Bar-YossefJKS04} falls short of proving
any lower bound on the information complexity measure we have defined
- mainly because of the fact the information complexity measure
consists of sum two different mutual information terms for each
processor, and it is not clear that one can come up with lower bounds
for both the terms simultaneously. This problem has already been
attended to in \cite{DBLP:conf/focs/BravermanEOPV13} and the proof we
present here resembles the proof technique used by them. For
completeness we include the proof in this paper. In the second step,
we will bound the information complexity of $\disj$ in terms of
$\Disj$. This step is more difficult and a straight-forward
application of the direct-sum argument of
\cite{DBLP:conf/focs/BravermanEOPV13} will not work. First we use
ideas from \cite{DBLP:conf/stoc/JayramKS03} to define partial
information complexity measure which is more convenient to work
with. Then we come up with a novel direct-sum argument for partial
information complexity measure.

In Section \ref{sec:lower-bounding-disj}, we show that the
information complexity of $\Disj$ is $\Omega(k)$. We manage to show
this by combining ideas from \cite{DBLP:conf/focs/BravermanEOPV13,
  DBLP:conf/stoc/JayramKS03}.

\subsection{Direct sum}
\label{sec:directsum}
In this section we prove that the information cost of computing
$\tribes$ is $m$ times the information cost of computing $\disj$. The
proof is almost the same proof as in
\cite{DBLP:conf/focs/BravermanEOPV13} where the authors have used a
direct sum theorem to show that the information cost of computing
$\disj$ is $\ell$ times the information cost of computing $k$-bit
$\AND$. Before going into details we need the following definitions
%

Consider $f:\mathcal{D}^m \rightarrow \mathcal{R}$ can be
written as $f(X)=g(h(X_1),\dots,h(X_m))$ where $X = \langle
X_1,\dots,X_m \rangle$, $X_i \in \mathcal{D}$ and $h:\mathcal{D}
\rightarrow \mathcal{R}$. In other words, $f$ is $g$-decomposable with
primitive $h$.

\begin{definition}[Collapsing distribution]
  We call $X \in \mathcal{D}^m$ be a collapsing input for $f$ if for
  any $i \in [m]$ and $y \in \mathcal{D}$, we have
  $f(X(i,y))=h(y)$. Any distribution $\zeta$ supported
  entirely on collapsing inputs on $f$ is called a collapsing
  distribution of $f$.
\end{definition}

\begin{definition}[Projection]
 Given a distribution $\nu$ specified by random variable $(D_1,\dots,D_k)$ and a subset $S$ of $[k]$, we call the projection of $\nu$ on $(D_i)_{i\in S}$, denoted as 
 $\nu\downarrow_{(D_i)_{i\in S}}$, the marginal distribution of $(D_i)_{i\in S}$ induced by $\nu$.
\end{definition}

The proof is by reduction: we will show that given a
protocol $\Pi$ for $\tribes$ and a collapsing distribution $\mu =
\zeta_l^m$ , we can construct a protocol $\Pi'$ for
$\disj$ such that it computes $\disj$ with the same error probability
as that of $\Pi$ and the information complexity of $\Pi$ is
$m$ times that of $\disj$.

\begin{theorem}
  \label{thm:direct-sum1}
Let $\mu = \zeta_\ell^m$ be a collapsing distribution for $\tribes$ partitioned by
$\mathbf{M, Z}$ and $\mathbf{W}$ as described before. Then
\begin{equation}
  \label{eq:4}
  \mathsf{IC}_\mu(\tribes) \geq m. \mathsf{IC}_{\zeta_\ell}(\disj).
\end{equation}
\end{theorem}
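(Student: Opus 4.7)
The plan is a standard direct-sum via simulation. Fix any $\delta$-error protocol $\Pi$ for $\tribes$. For each $j\in[m]$ I will construct a $\delta$-error protocol $\Pi'_j$ for $\disj$ such that
\[
\mathsf{IC}_\mu(\Pi)\;\ge\;\sum_{j=1}^m \mathsf{IC}_{\zeta_\ell}(\Pi'_j)\;\ge\; m\cdot \mathsf{IC}_{\zeta_\ell}(\disj),
\]
after which taking an infimum over $\Pi$ yields the theorem.

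The embedding is as follows. Given a $\disj$-input $\Y$, the processors use public randomness to sample $(\bar{\M}_{-j},\bar{\Z}_{-j},\bar{\W}_{-j})$ from its marginal under $\mu$, and then each processor $i$ privately samples its row $\X_{j'}^i$ of every block $j'\neq j$ from the conditional distribution of $\X_{j'}^i$ under $\zeta_\ell$ given $(M_{j'},Z_{j'},W_{j'})$. No communication is needed because, once $(\M,Z,\W)$ is fixed, $\zeta_\ell$ factorises as a product across the $k$ processors (the column $W$ is pinned to all-ones, and every other column factorises given $M_j$ and $Z_j$). Setting $\X_j:=\Y$, the processors run $\Pi$ on the resulting $\bar{\X}$ and output its answer. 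Because $\mu$ is \emph{collapsing}, $\tribes(\bar{\X})=\disj(\Y)$ for every $\Y$, so $\Pi'_j$ solves $\disj$ with error at most $\delta$ on every input.

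For the information-cost bookkeeping, observe that when $\Y\sim\zeta_\ell$ the constructed tuple $(\bar{\X},\bar{\M},\bar{\Z},\bar{\W})$ is distributed exactly as $\mu$, and processor $i$'s view $(\Pi'_j)^i$ of the simulation is exactly $\Pi^i(\bar{\X})$. Applying the chain rule block-by-block gives, for every $i$,
\[
\underset{\mu}{\I}[\bar{\X}^i:\Pi^i\mid \bar{\M},\bar{\Z},\bar{\W}] \;=\; \sum_{j=1}^m \underset{\mu}{\I}[\X_j^i:\Pi^i\mid \X_{<j}^i,\bar{\M},\bar{\Z},\bar{\W}],
\]
and an analogous identity for the $\bar{\M}$-term. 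Since $\mu=\zeta_\ell^m$ factorises across blocks, $\X_j^i$ is independent of $(\X_{<j}^i,\bar{\M}_{-j},\bar{\Z}_{-j},\bar{\W}_{-j})$ given $(M_j,Z_j,W_j)$, so eq.~(\ref{eq:31}) of Lemma~\ref{lemma:condn-excg} yields
\[
\underset{\mu}{\I}[\X_j^i:\Pi^i\mid \X_{<j}^i,\bar{\M},\bar{\Z},\bar{\W}] \;\ge\; \underset{\mu}{\I}[\X_j^i:\Pi^i\mid M_j,Z_j,W_j],
\]
and the parallel block-independence of $M_j$ given $(\X_j^i,Z_j,W_j)$ delivers $\I_\mu[M_j:\Pi^i\mid M_{<j},\bar{\X}^i,\bar{\Z},\bar{\W}]\ge \I_\mu[M_j:\Pi^i\mid \X_j^i,Z_j,W_j]$. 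Summing the two bounds over $i$ and $j$ and recognising the right-hand side as $\sum_j \mathsf{IC}_{\zeta_\ell}(\Pi'_j)$ closes the direct-sum inequality.

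The main obstacle is precisely the bookkeeping step: one must pick chain-rule orderings and conditioning sets so that the ``additional conditioning on an independent variable'' form of Lemma~\ref{lemma:condn-excg} applies cleanly to both the $\X$-term and the $\M$-term simultaneously. Once the block-independence of $\mu$ (including the new column selector $\bar{\W}$ introduced when passing from $\tau$ to $\mu$) is exploited in the two natural orderings above, the rest of the argument is routine and parallels the corresponding direct-sum in~\cite{DBLP:conf/focs/BravermanEOPV13}.
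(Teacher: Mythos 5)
There is a genuine gap, and it sits exactly at the step you call ``bookkeeping.'' In your symmetric embedding, every processor must know $\bar{\M}_{-j}$, $\bar{\Z}_{-j}$, $\bar{\W}_{-j}$ in order to sample its rows of the other blocks; since the model has only private coins, these values have to be generated and communicated (e.g.\ by the coordinator), so they are part of every processor's view of $\Pi'_j$. Consequently the $\M$-part of $\mathsf{IC}_{\zeta_\ell}(\Pi'_j)$ is not $\I[\M_j:\Pi^i\mid \X^i_j,\Z_j,\W_j]$, as you identify it at the end, but (after Lemma~\ref{lemma:condn-excg}, eq.~\eqref{eq:30}) the quantity $\I[\M_j:\Pi^i(\bar{\X})\mid \bar{\M}_{-j},\bar{\Z},\bar{\W},\X^i_j]$, i.e.\ a \emph{leave-one-out} conditioning on all the other blocks' switches. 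These terms do not telescope: the chain rule only gives $\I[\bar{\M}:\Pi^i\mid\cdot]=\sum_j \I[\M_j:\Pi^i\mid \M_{<j},\cdot]$ with prefix conditioning, and eq.~\eqref{eq:31} goes the wrong way for removing $\M_{>j}$ (adding independent conditioning can only increase information). Indeed $\sum_j \I[\M_j:\Pi^i\mid \bar{\M}_{-j},\cdot]$ can exceed $\I[\bar{\M}:\Pi^i\mid\cdot]$ by a factor as large as $m$ (take $\Pi^i$ correlated with $\M_1\oplus\cdots\oplus\M_m$), so the inequality $\mathsf{IC}_\mu(\Pi)\ge\sum_j\mathsf{IC}_{\zeta_\ell}(\Pi'_j)$ is not established and is false for this kind of simulation in general. (Your $\X$-term is fine: there the view does not contain $\X^i_{-j}$, and conditioning on all of $\bar{\M}$ is consistent with the chain rule.)

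This is precisely why the paper (following \cite{DBLP:conf/focs/BravermanEOPV13}) uses a single protocol with a \emph{random} embedding coordinate $\J$ and an asymmetric treatment of past and future blocks: for $t<\J$ the coordinator reveals $\M_t$ and the processors sample their inputs privately, while for $t>\J$ the coordinator samples the inputs herself and sends processor $i$ only $\X^i_t$, never $\M_t$. Then processor $i$'s view contains exactly $\bar{\M}_{[1,\J-1]}$ and $\bar{\X}^i_{[\J+1,m]}$, so after averaging over $\J$ the $\M$-term matches the forward chain rule $\I[\M_j:\Pi^i\mid \M_{[1,j-1]},\bar{\X}^i,\bar{\Z},\bar{\W}]$ (with $\X^i_{[1,\J-1]}$ added by eq.~\eqref{eq:31}), and the $\X$-term matches the reverse-order chain rule; both decompositions sum to the two parts of $\mathsf{IC}_\mu(\Pi)$ with the factor $1/m$. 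To repair your proof you would have to replace the symmetric public sampling of $\bar{\M}_{-j}$ by this past/future asymmetric sampling; as written, the key inequality does not follow.
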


As mentioned before, the proof of Theorem \ref{thm:direct-sum1} works
out nicely by adapting the proof techniques of
\cite{DBLP:conf/focs/BravermanEOPV13} and is given below.
 We will describe how to construct protocol $\Pi'$ for $\disj$
given protocol $\Pi$ for $\tribes$. On an input $u$
for $\disj$, the processors and the coordinator sample a random $k \times m\ell$ matrix $\bar{\X}$  in the following way.
\begin{enumerate}
\item The coordinator samples $\mathbf{J}$ uniformly at random from $[m]$. This
  is the place where the processors embed the input $u$. 

\item The coordinator samples $\bar{\Z}_{-\mathbf{J}} \in
  [k]^{\ell(m-1)}$ and sends it to all the processors.

\item The coordinator samples $\bar{\W}_{-\mathbf{J}} \in [\ell]^{m-1}$
  and sends it to all the processors.

\item The coordinator samples $\mathbf{M}_t \in \bool^{\ell}, \mathbf{M}_t \sim \mathsf{Bin}(1/3)$ for all $t$
  less that $\mathbf{J}$ and sends it to all the processors. The processors
  use their private randomness to sample the inputs for those $\disj$
  instances in the following way: For the $i$th $\disj$ instance, the processors sample $\X$ by sampling each of the columns independently - For column $j$ the input of the $\Z_j$th processor is fixed to $0$ and the other processors get $1$ if $\M_j=1$, otherwise, they get $0$ or $1$ uniformly at random 

\item For the rest of $\disj$ instances, the coordinator herself
  samples the inputs in the same way and sends the requisite inputs to respective
  processors.
\end{enumerate}

\begin{observation}
	\label{OBS-2}
Consider the tuple $(\U,\N,\V,\Ss)$ distributed according to $\zeta_l$. If $\U$ is 
given as input to protocol $\Pi'$, then $(\bar{\X}, \bar{\M}, \bar{\Z}, \bar{\W})$ is distributed according to $\mu$.
\end{observation}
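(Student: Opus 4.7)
The plan is a direct verification against the definition of $\mu = \zeta_\ell^m$. Since $\mu$ is by construction a product of $m$ independent copies of $\zeta_\ell$ indexed by the blocks $t\in[m]$, the observation reduces to showing that in the reduction (a) the marginal law of each block $t$ is $\zeta_\ell$, and (b) the $m$ blocks are mutually independent. I would carry this out by conditioning on the value of $\mathbf{J}$ and checking both properties block by block.

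First, for the distinguished block $t=\mathbf{J}$: by hypothesis $(\U,\N,\V,\Ss)\sim\zeta_\ell$, and the reduction plants these four random variables verbatim into position $\mathbf{J}$ of $(\bar{\X},\bar{\M},\bar{\Z},\bar{\W})$. Hence, conditional on $\mathbf{J}=j$, this block is marginally $\zeta_\ell$ for every $j$. For each remaining block $t\neq\mathbf{J}$, I would match the sampling in steps 2--5 against the definition of $\zeta_\ell$: $\bar{\Z}_t$ is uniform on $[k]^\ell$, $\bar{\W}_t$ is uniform on $[\ell]$, $\bar{\M}_t$ is coordinate-wise $\mathsf{Bin}(1/3)$, and $\bar{\X}_t$ is then sampled conditional on $(\bar{\M}_t,\bar{\Z}_t,\bar{\W}_t)$ exactly as $\zeta_\ell$ prescribes (an all-ones column at index $\bar{\W}_t$, and for every other column a $0$ at row $\bar{\Z}_{t,j}$ together with $1$'s in the other rows when $\bar{\M}_{t,j}=1$ and uniformly random bits otherwise). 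Independence across blocks follows immediately because steps 2--5 use disjoint randomness for distinct $t$: independent private coins of the processors for the blocks $t<\mathbf{J}$ handled by them, and independent fresh coins of the coordinator for the blocks $t>\mathbf{J}$.

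Combining these, conditional on $\mathbf{J}=j$, the tuple $(\bar{\X},\bar{\M},\bar{\Z},\bar{\W})$ is a product of $m$ independent $\zeta_\ell$-samples, i.e.\ distributed according to $\mu$. Since this conditional distribution is the same for every $j\in[m]$, averaging over the uniform choice of $\mathbf{J}$ yields the same unconditional law $\mu$, which is what we want. The only mild point to be careful about is aligning the description in steps 4--5 with $\zeta_\ell$ rather than with the intermediate $\tau$-style distribution, i.e.\ making sure that the column at position $\bar{\W}_t$ is overwritten to all-ones after the $\tau$-style sampling; once that is clear, the observation is a purely syntactic consequence of the construction and there is no real obstacle.
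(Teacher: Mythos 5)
Your verification is correct, and it is exactly the routine check the paper intends: the paper states Observation~\ref{OBS-2} without proof, treating it as immediate from the construction of $\Pi'$, and your block-by-block argument (block $\mathbf{J}$ inherits $\zeta_\ell$ from the external sample, blocks $t\neq\mathbf{J}$ are sampled from $\zeta_\ell$ with disjoint fresh randomness, then average over $\mathbf{J}$) is the natural way to fill that in. Your closing caveat about making sure the all-ones column at $\bar{\W}_t$ is planted in the non-embedded blocks (so that they follow $\zeta_\ell$ rather than the $\tau$-style distribution) is well taken and consistent with how the construction must be read.
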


As $\mu$ is a collapsing distribution for $\tribes$, it is easy to see
that the protocol $\Pi'$ computes $\disj$. We need to show the
connection between information cost of this protocol and the
information cost of $\tribes$ which is shown next.

We will prove the following lemma.

\begin{lemma}\cite{DBLP:conf/focs/BravermanEOPV13}
  \label{lemma:directsum1}
  \begin{equation}
    \label{eq:1}
    \underset{(\mathbf{U,N,V,S}) \sim
      \zeta_\ell}{\I}[\mathbf{U}^i:\Pi'^i(\mathbf{U})\ |\ \mathbf{N,V,S}]
    \leq \frac{1}{m}\underset{(\bar{\X},\bar{\M},\bar{\W},\bar{\Z}) \sim
      \mu}{\I}[\bar{\X}^i:\Pi^i(\bar{\X})\ |\ \bar{\M},\bar{\W},\bar{\Z}],
  \end{equation}
and
  \begin{equation}
    \label{eq:2}
    \underset{(\mathbf{U,N,V,S}) \sim
      \zeta_\ell}{\I}[\mathbf{N}:\Pi'^i(\mathbf{U})\ |\
    \mathbf{U}^i,\mathbf{V,S}]
    \leq \frac{1}{m}\underset{(\bar{\X},\bar{\M},\bar{\W},\bar{\Z}) \sim
      \mu}{\I}[\bar{\M}:\Pi^i(\bar{\X})\ |\ \bar{\X}^i,\bar{\W},\bar{\Z}].
  \end{equation}
\end{lemma}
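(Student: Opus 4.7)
The plan is to prove both inequalities by the same three-step template: expand the right-hand side by the chain rule of mutual information over the $m$ blocks, use the block-product structure $\mu=\zeta_\ell^m$ together with Lemma \ref{lemma:condn-excg} to pass between one-block and full-block conditionings, and identify each one-block quantity with the conditional-on-$\mathbf{J}=j$ form of the left-hand side via Observation \ref{OBS-2}. All information-theoretic quantities will be taken over the relevant joint distribution ($\zeta_\ell$ together with the protocol's randomness on the LHS, and $\mu$ on the RHS), whose compatibility is exactly what Observation \ref{OBS-2} asserts.

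For inequality (1), I would first exploit that the coordinator samples $\mathbf{J}\in[m]$ uniformly and independently of $(\U,\N,\V,\Ss)$, so $\I[\U^i:\mathbf{J}\ |\ \N,\V,\Ss]=0$ and adding $\mathbf{J}$ to the conditioning can only increase the mutual information:
\[
\I[\U^i:\Pi'^i(\U)\ |\ \N,\V,\Ss]\;\le\;\I[\U^i:\Pi'^i(\U)\ |\ \N,\V,\Ss,\mathbf{J}]\;=\;\frac{1}{m}\sum_{j=1}^m \I[\U^i:\Pi'^i(\U)\ |\ \N,\V,\Ss,\mathbf{J}=j].
\]
Conditioned on $\mathbf{J}=j$, the construction of $\Pi'$ plants $\U$ in block $j$ of $\bar{\X}$ and samples the other blocks from their $\zeta_\ell^{m-1}$ marginal, so Observation \ref{OBS-2} lets me identify the $j$-th summand with $\I[\bar{\X}^i_j:\Pi^i(\bar{\X})\ |\ \bar{\M}_j,\bar{\Z}_j,\bar{\W}_j]$ computed under $\mu$. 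Since under $\mu$ the tuples $(\bar{\X}_j,\bar{\M}_j,\bar{\Z}_j,\bar{\W}_j)$ are independent across $j$, the independences $\bar{\X}^i_j\perp(\bar{\M}_{-j},\bar{\Z}_{-j},\bar{\W}_{-j})\mid(\bar{\M}_j,\bar{\Z}_j,\bar{\W}_j)$ and $\bar{\X}^i_j\perp\bar{\X}^i_{<j}\mid(\bar{\M},\bar{\Z},\bar{\W})$ both hold, and two successive applications of Lemma \ref{lemma:condn-excg} give
\[
\I[\bar{\X}^i_j:\Pi^i(\bar{\X})\ |\ \bar{\M}_j,\bar{\Z}_j,\bar{\W}_j]\;\le\;\I[\bar{\X}^i_j:\Pi^i(\bar{\X})\ |\ \bar{\M},\bar{\Z},\bar{\W}]\;\le\;\I[\bar{\X}^i_j:\Pi^i(\bar{\X})\ |\ \bar{\M},\bar{\Z},\bar{\W},\bar{\X}^i_{<j}].
\]
Summing over $j$ and applying the chain rule to the last term produces $\frac{1}{m}\I[\bar{\X}^i:\Pi^i(\bar{\X})\ |\ \bar{\M},\bar{\Z},\bar{\W}]$, which is the desired bound.

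Inequality (2) will follow by exactly the same template with $\bar{\M}$ playing the role of $\bar{\X}^i$ and $(\bar{\X}^i,\bar{\Z},\bar{\W})$ in place of $(\bar{\M},\bar{\Z},\bar{\W})$ in the conditioning sets; the independences required for Lemma \ref{lemma:condn-excg} — namely $\bar{\M}_j\perp(\bar{\X}^i_{-j},\bar{\Z}_{-j},\bar{\W}_{-j})\mid(\bar{\X}^i_j,\bar{\Z}_j,\bar{\W}_j)$ and $\bar{\M}_j\perp\bar{\M}_{<j}\mid(\bar{\X}^i,\bar{\Z},\bar{\W})$ — are again immediate from the block-product structure of $\mu$. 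Once the template is in place, the information-theoretic manipulation is routine; the step that will really need care is the distributional bookkeeping underpinning Observation \ref{OBS-2}, specifically verifying that $\mathbf{J}$ is independent of the embedded tuple $(\U,\N,\V,\Ss)$ and that for each fixed $\mathbf{J}=j$ the joint law of $(\bar{\X},\bar{\M},\bar{\Z},\bar{\W})$ agrees with $\mu$ block-by-block, since every cross-distribution identification of conditional mutual informations relies on this.
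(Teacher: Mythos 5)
Your overall plan — condition on $\mathbf{J}$, pass to per-block quantities, and reassemble with the chain rule — is the right one and matches the paper's strategy, but the middle step has a genuine error in the identification of the $j$-th summand, and the error is in the wrong direction to be harmless.

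Conditioned on $\mathbf{J}=j$, the random variable $\Pi'^i(\mathbf{U})$ is not just $\Pi^i(\bar{\X})$: it also contains the shared samples that the coordinator broadcasts during the embedding, namely $\bar{\Z}_{-j}$, $\bar{\W}_{-j}$, $\bar{\M}_{[1,j-1]}$, and $\bar{\X}^i_{[j+1,m]}$. So the $j$-th summand is
\[
\I\bigl[\bar{\X}^i_j:\ \bar{\Z}_{-j},\bar{\W}_{-j},\bar{\M}_{[1,j-1]},\bar{\X}^i_{[j+1,m]},\Pi^i(\bar{\X})\ \bigm|\ \bar{\M}_j,\bar{\Z}_j,\bar{\W}_j\bigr],
\]
which by the chain rule is \emph{at least} $\I[\bar{\X}^i_j:\Pi^i(\bar{\X})\,|\,\bar{\M}_j,\bar{\Z}_j,\bar{\W}_j]$, not equal to it. Observation~\ref{OBS-2} only asserts agreement of the input distribution $(\bar{\X},\bar{\M},\bar{\Z},\bar{\W})$ with $\mu$; it says nothing about dropping the extra content of the $\Pi'$ transcript. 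Writing equality there replaces the $j$-th summand by something smaller, so your subsequent chain of upper bounds is applied to the wrong quantity — you end up bounding $\tfrac1m\sum_j\I[\bar{\X}^i_j:\Pi^i|\bar{\M}_j,\bar{\Z}_j,\bar{\W}_j]$, which sits on the wrong side of a $\geq$ from the actual LHS, and nothing forces it back to the target.

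The correct move (and the one the paper makes) is to use Lemma~\ref{lemma:condn-excg}, Equation~\eqref{eq:30}: since the shared samples live entirely in blocks other than $j$, they are independent of $\bar{\X}^i_j$ given $(\bar{\M}_j,\bar{\Z}_j,\bar{\W}_j)$, so they can be moved from the mutual-information argument into the conditioning with equality. The $j$-th summand then equals $\I[\bar{\X}^i_j:\Pi^i(\bar{\X})\,|\,\bar{\M}_{[1,j]},\bar{\Z},\bar{\W},\bar{\X}^i_{[j+1,m]}]$. From here one application of Equation~\eqref{eq:31} adds $\bar{\M}_{[j+1,m]}$, and then averaging over $j$ and applying the chain rule with the \emph{decreasing} ordering (conditioning on $\bar{\X}^i_{>j}$, which the shared samples hand you for free, not $\bar{\X}^i_{<j}$ as you wrote) gives $\tfrac1m\I[\bar{\X}^i:\Pi^i(\bar{\X})\,|\,\bar{\M},\bar{\Z},\bar{\W}]$. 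The same correction is needed for inequality~\eqref{eq:2}, where the shared $\bar{\M}_{[1,j-1]}$ produces the conditioning $\bar{\M}_{<j}$ that the forward chain rule requires. So the fix is routine once you account for the shared samples, but as written the proposal skips the one step that makes the chain-rule bookkeeping actually close.
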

Lemma \ref{lemma:directsum1} implies Theorem \ref{thm:direct-sum1}.

\begin{proof}[Proof of Lemma \ref{lemma:directsum1}]
  The proof is exactly same as the proof of Lemma 4.1 of
  \cite{DBLP:conf/focs/BravermanEOPV13}.

  We will prove Equation \eqref{eq:2} first. Equation \eqref{eq:1} can
  be proved similarly. Let's look at the view of processor $i$ on the
  transcript $\Pi'$, which we are calling as $\Pi'^i$. We have
  \begin{equation*}
    \Pi'^i(\mathbf{U}) = \langle \mathbf{J}, \bar{\Z}_{-\J}, \bar{\M}_{[1,\J-1]},
    \bar{\X}_{[\J+1,m]}^i, \bar{\W}_{-\J}, \Pi(\bar{\X}(
    \mathbf{J, U})) \rangle.
  \end{equation*}

  As $\langle \mathbf{J}, \bar{\Z}_{-\mathbf{J}}, \bar{\M}_{[1,\mathbf{J}-1]},
  \bar{\X}_{[\mathbf{J}+1,m]}^i, \bar{\W}_{-\mathbf{J}} \rangle$ is independent of
  $\mathbf{N}$, we can write what follows.

  \begin{eqnarray}
    \label{eq:3}
    & & \I_{(\mathbf{U,N,V,S}) \sim \zeta_\ell}[\mathbf{N}:\Pi'^i(\U)\ |\
    \U^i,\V, \mathbf{S}]\nonumber \\ \nonumber
    &=& \underset{\underset{(\bar{\Z}_{-\mathbf{J}}, \bar{\M}_{-\mathbf{J}},
        \bar{\X}_{-\mathbf{J}}, \mathbf{W}_{-\mathbf{J}}) \sim
        \zeta_\ell^{m-1}}{(\U,\N,\V,\Ss) \sim \zeta_\ell}}{\I}[\N:\langle \J, \bar{\Z}_{-\J}, \bar{\M}_{[1,\J-1]},
    \bar{\X}_{[\J+1,m]}^i, \bar{\W}_{-\J},
    \Pi(\bar{\X}(
    \mathbf{J, U}))\rangle\ |\ \U^i,\V,\mathbf{S}]\\ \nonumber
    &=& \underset{\underset{(\bar{\Z}_{-\mathbf{J}}, \bar{\M}_{-\mathbf{J}},
        \bar{\X}_{-\mathbf{J}}, \mathbf{W}_{-\mathbf{J}}) \sim
        \zeta_\ell^{m-1}}{(\U,\N,\V,\Ss) \sim \zeta_\ell}}{\I}[\N:\Pi(\bar{\X}(
    \mathbf{J, U}))\ |\  \J, \bar{\Z}_{-\J}, \bar{\M}_{[1,\J-1]},
    \bar{\X}_{[\J+1,m]}^i, \bar{\W}_{-\J}, \U^i,\V,\Ss]\\ \nonumber
    &=& \underset{(\bar{\X},\bar{\M},\bar{\W},\bar{\Z}) \sim \mu}{\I} [\bar{\M}_{\J}:
    \Pi^i(\bar{\X})\ |\ \J, \bar{\M}_{[1,\J-1}, \bar{\X}_{[\J,m]}^i,
    \bar{\Z}, \bar{\W}] \\ \nonumber
    &\leq& \underset{(\bar{\X},\bar{\M},\bar{\W},\bar{\Z}) \sim \mu}{\I} [\bar{\M}_{\J}:
    \Pi^i(\bar{\X})\ |\ \J, \bar{\M}_{[1,\J-1}, \bar{\X}^i,
    \bar{\Z}, \bar{\W}]\\ \nonumber
    &=& \frac{1}{m} \sum_{j=1}^m \underset{(\bar{\X},\bar{\M},\bar{\W},\bar{\Z}) \sim \mu}{\I} [\bar{\M}_{j}:
    \Pi^i(\bar{\X})\ |\ \bar{\M}_{[1,j-1}, \bar{\X}^i,
    \bar{\Z}, \bar{\W}]\\ \nonumber
    &=& \frac{1}{m}\underset{(\bar{\X},\bar{\M},\bar{\W},\bar{\Z}) \sim \mu}{\I} [\bar{\M}:
    \Pi^i(\bar{\X})\ |\  \bar{\X}^i,
    \bar{\Z}, \bar{\W}].\\ \nonumber
  \end{eqnarray}
  Equation \eqref{eq:1} can be proved in the same way.
\end{proof}

Now our goal is  to connect the information cost of $\disj$ under
$\zeta_\ell$ to information cost of $\AND$. So a natural attempt is to
prove a theorem like Theorem \ref{thm:direct-sum1} for reduction from $\disj$
to $\AND$. Unfortunately this is not possible. Recall that $\disj(X)=
\bigvee_{i=1}^\ell \bigwedge_{j=1}^k X_i^j$. Hence for a collapsing
distribution each of the $\AND$s should evaluate to $0$, which is not
the case for the distribution $\zeta_\ell$. 

Inspired by \cite{DBLP:conf/stoc/JayramKS03}, we define the
following measure of information cost, namely, partial
information cost. Let $\Pi$ be a protocol for $\disj$. The partial
information cost of $\Pi$ is defined as,

\begin{equation}
  \label{eq:5}
  \mathsf{PIC}(\Pi)=\sum_{i=1}^k \left( \mathbb{I}\left[\M_{-\W}:\Pi^i(\X)\ |\ \X^i, \Z,
      \W\right] + \mathbb{I}\left[\X^i_{-\W}:\Pi^i(\X)\ |\ \M, \Z,
      \W\right] \right).
\end{equation}
The random variable $\M_{-\W}$ denotes $\M$ with its $\W$-th
coordinate removed. Similarly, $\X^i_{-\W}$ denotes $\X^i$ with its
$\W$-th coordinate removed. The partial information complexity of
$\disj$ is the partial information cost of the best protocol computing
$\disj$. It is easy to see that the partial information complexity of
any function $f$ lower bounds the information complexity of $f$.

We prove the following theorem.

\begin{theorem}
  \label{thm:moddirectsum}
Let $\zeta_\ell$ be the distribution over the inputs of $\disj$
partitioned by $\M,\Z,\W$ as described before. Then
\begin{equation}
  \label{eq:15}
  {\mathsf{PIC}}_{\zeta_\ell}(\disj) \geq (\ell-1). {\mathsf{PIC}}_{\zeta_2}(\Disj).
\end{equation}
\end{theorem}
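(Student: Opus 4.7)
The plan is to adapt the reduction template of Theorem~\ref{thm:direct-sum1} to the present situation, in which the input distribution $\zeta_\ell$ for $\disj$ is \emph{not} collapsing because its $\W$-th column is always all-ones. From any $\delta$-error protocol $\Pi$ for $\disj$ I will construct a protocol $\Pi'$ for $\Disj$ whose partial information cost under $\zeta_2$ is at most $\frac{1}{\ell-1}$ times that of $\Pi$ under $\zeta_\ell$; taking the infimum over $\Pi$ then gives Theorem~\ref{thm:moddirectsum}.

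On input $\U$ for $\Disj$, the coordinator in $\Pi'$ uses public randomness to sample a uniformly random ordered pair of distinct positions $(A,B)\in[\ell]^2$ and directs each player $i$ to treat $\U^i_1$ as its bit in column $A$ and $\U^i_2$ as its bit in column $B$ of a $k\times\ell$ matrix $\X$. The coordinator also samples from its private randomness the remaining $\ell-2$ columns of $\X$ from $\tau$ (drawing $\M_j,\Z_j$ and the corresponding column in the usual manner) and forwards to each player the bits on those columns belonging to its row. The players then run $\Pi$ on $\X$ and return $\Pi$'s output. Defining $\W:=A$ if $\Ss=1$ and $\W:=B$ if $\Ss=2$, a routine check analogous to Observation~\ref{OBS-2} shows that $(\X,\M,\Z,\W)$ has distribution exactly $\zeta_\ell$; moreover, $\disj(\X)=1=\Disj(\U)$ on every $\zeta_2$-input because $\X$ inherits the all-ones column of $\U$, so $\Pi'$ retains $\Pi$'s error probability on inputs drawn from $\zeta_2$.

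The heart of the proof is then to establish, for each processor $i\in[k]$, the two inequalities
\begin{align*}
  \I_{\zeta_\ell}[\M_{-\W}:\Pi^i(\X)\ |\ \X^i,\Z,\W] &\ \geq\ (\ell-1)\cdot \I_{\zeta_2}[\N_{-\Ss}:\Pi'^i(\U)\ |\ \U^i,\V,\Ss],\\
  \I_{\zeta_\ell}[\X^i_{-\W}:\Pi^i(\X)\ |\ \M,\Z,\W] &\ \geq\ (\ell-1)\cdot \I_{\zeta_2}[\U^i_{-\Ss}:\Pi'^i(\U)\ |\ \N,\V,\Ss];
\end{align*}
summed over $i$, these immediately yield the theorem. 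For the first, I apply the chain rule to the $\ell-1$ coordinates of $\M_{-\W}$. Since the columns of $\X$ are mutually independent given $\W$ under $\zeta_\ell$, so are the $\M_j$'s, and Lemma~\ref{lemma:condn-excg} lets me drop the chain-rule ``history'' conditioning to obtain
\[
\I[\M_{-\W}:\Pi^i\ |\ \X^i,\Z,\W]\ \geq\ \sum_{j\ne\W}\I[\M_j:\Pi^i\ |\ \X^i,\Z,\W].
\]
Running the chain of manipulations from the proof of Lemma~\ref{lemma:directsum1} essentially verbatim — with the only change that the random embedding index $\J$ is uniform on $[\ell]\setminus\{\W\}$ rather than $[m]$ — I then identify each summand, after averaging over the embedding's random $(A,B)$, with $\I_{\zeta_2}[\N_{-\Ss}:\Pi'^i\ |\ \U^i,\V,\Ss]$. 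The key independence that makes this work is that the public pair $(A,B)$ and the coordinator's samples of the non-embedded columns are independent of $\N_{-\Ss}$ given $(\U^i,\V,\Ss)$, so they can be folded into the conditioning via equation~\eqref{eq:30}. The second inequality proceeds by exactly the same manipulation with $\M,\N$ replaced by $\X^i_{-\W},\U^i_{-\Ss}$.

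The principal obstacle I anticipate is the coupling between the chain-rule index $j$ and the random planted coordinate $\W$: the sum $\sum_{j\ne\W}$ has a $\W$-dependent range, and when matching its summands to $\Pi'$'s partial information cost one must simultaneously average over $\W$ and the embedding index so that the induced joint distribution of $(\W,\J)$ is uniform on ordered distinct pairs of $[\ell]$. This is precisely what the partial information cost measure is engineered for: by ignoring the planted coordinate, the non-collapsing distribution $\zeta_\ell$ behaves collapsingly on the remaining $\ell-1$ coordinates, so the direct-sum machinery of \cite{DBLP:conf/focs/BravermanEOPV13} adapts with the $\ell-1$ loss instead of the naive $\ell$.
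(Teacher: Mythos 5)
Your reduction omits a crucial ingredient that the paper's proof cannot do without: the random partition of the $\ell-2$ non-embedded columns into a publicly-$\M$-revealed part and a privately-coordinator-sampled part (the paper's step of drawing $\Rr\in\{0,\dots,\ell-2\}$ and a random subset $\Tt$ of size $\Rr$). In your protocol the coordinator samples \emph{all} remaining columns privately and ships only the bits $\X^i_j$ to player $i$. The consequence shows up when you try to prove the second inequality, the one bounding $\I_{\zeta_\ell}[\X^i_{-\W}:\Pi^i\ |\ \M,\Z,\W]$ from below by $(\ell-1)$ times the corresponding $\Disj$ quantity. After moving the public reduction randomness into the conditioning, the $\Disj$ term for your $\Pi'$ becomes (schematically) $\I[\X^i_a:\Pi^i\ |\ \W=b,\ \X^i_{-\{a,b\}},\ \M_a,\M_b,\ \Z_a,\Z_b]$, i.e.\ it carries conditioning on the \emph{entire} remaining row $\X^i_{-\{a,b\}}$ but only on the two embedded coordinates' $\M,\Z$ values. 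Meanwhile, the chain rule on the $\disj$ side gives $\sum_{j\ne\W}\I[\X^i_j:\Pi^i\ |\ \X^i_{<j},\M,\Z,\W]$, where the generic summand conditions on only a \emph{prefix} $\X^i_{<j}\subsetneq\X^i_{-\{j,\W\}}$ but on all of $\M,\Z$. These two terms are incomparable: by Lemma~\ref{lemma:condn-excg}~\eqref{eq:31}, adding the extra $\X^i$-conditioning \emph{increases} mutual information, while dropping the extra $\M,\Z$-conditioning \emph{decreases} it, and the two effects push in opposite directions. Only when $j$ lands last in the chain-rule ordering does $\X^i_{<j}$ coincide with $\X^i_{-\{j,\W\}}$ and the comparison go your way; for all other positions it can fail, and the sum can genuinely be smaller than what your reduction's $\Disj$ cost would demand.

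This mismatch is exactly what the paper's $\Rr,\Tt$ device is engineered to eliminate. By choosing a random threshold and a random subset, the paper makes the publicly-visible part of $\M$ (namely $\M_\Tt$) and the privately-held part of $\X^i$ (namely $\X^i_{\bar\Tt\setminus\{\Pp,\Qq\}}$) jointly distributed so that, after the combinatorial count $(\ell-r-2)!\,r!$ of permutations with a prescribed prefix, the resulting expression is precisely the chain rule over a uniformly random permutation of $[\ell]\setminus\{q\}$ (the paper's Eq.~\eqref{eq:49}). Then both partial-information terms are handled symmetrically: for one term the missing $\X^i$-conditioning is added via Lemma~\ref{lemma:condn-excg}~\eqref{eq:31}, for the other the missing $\M$-conditioning is, and in each case the added conditioning matches what the chain rule already provides. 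Your "drop the chain-rule history" shortcut, while a valid inequality, is applied only on the $\M$ side and does not address this structural issue on the $\X^i$ side; and your stated intent to ``run the manipulations of Lemma~\ref{lemma:directsum1} essentially verbatim'' is inconsistent with your simplified reduction, since those manipulations rely on the reduction having exactly the random-prefix structure you removed. To repair the argument you would need to reintroduce the $\Rr,\Tt$ sampling (or an equivalent random-ordering device) so that the $\Disj$ conditioning again matches a random chain-rule prefix, which is essentially the paper's proof.

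Two smaller points. First, you also omit having the coordinator broadcast $\Z_{-\{\Pp,\Qq\}}$; the paper sends these to everyone (step~2), which is what makes the conditioning on the full $\Z$ on the $\disj$ side line up with the $\Disj$ cost. Second, there is a notational slip: in the paper's $\zeta_2$, $\V$ is the planted coordinate and $\Ss$ the $\Z$-variable (Observation~\ref{OBS-1}), whereas you write $\N_{-\Ss}$ and $\U^i_{-\Ss}$ as if $\Ss$ were the planted coordinate. The correct partial information cost terms are $\I[\N_{-\V}:\Pi''^i\ |\ \U^i,\V,\Ss]$ and $\I[\U^i_{-\V}:\Pi''^i\ |\ \N,\V,\Ss]$.
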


Here we will show the following reduction analogous to our previous
reduction from $\tribes$ to $\disj$. Given a protocol $\Pi'$ for
$\disj$ and distribution $\zeta_\ell$ (as described in Section
\ref{sec:overview}, we will come up with a protocol $\Pi''$ for
$\Disj$ such that the partial information cost of $\Pi''$
w.r.t. $\zeta_\ell$ is $1/(\ell-1)$ times the partial information cost
of $\Pi'$ w.r.t $\zeta_2$.

Let us describe the construction of the protocol $\Pi''$. On an input
$u = \langle u_1, u_2 \rangle$ for $\Disj$, the processors and the
coordinator sample a $k \times \ell$ random matrix $\X(u)$ in the following way.

\begin{enumerate}
\item The coordinator samples $\Pp$ and $\Qq$ uniformly at random from
  $[\ell]$ such that $\Pp < \Qq$.
\item The coordinator samples $\Z_{-\{\Pp,\Qq}\}=(\Z_i)_{i \in
    [\ell]\backslash \Pp,\Qq}$, where each $\Z_i
  \underset{R}{\in}[k]$, and sends it to all the processors.
\item The coordinator samples a number $\Rr$ uniformly at random from
  $\{0,...,\ell-2\}$ and then samples a subset $\Tt \subseteq
  [\ell]\backslash\{\Pp,\Qq\}$ uniformly at random from all sets of
  size $\Rr$ that do not contain $\Pp, \Qq$. Then the coordinator
  samples $\M_t \sim \mathsf{Bin}(1/3)$ for all $t \in \Tt$ and sends
  them to all the processors. The processors use their private
  randomness to sample $\X_t$ for each column $t$ in $\Tt$ in the
  following way: The input of the $\Z_t$-th processor is fixed to $0$
  in $\X_t$ and the other processors get $1$ if $\M_t=1$, otherwise,
  if $\M_t=0$, they get $0$ or $1$ uniformly at random. We will call
  this input sampling procedure as \textsf{IpSample}.
\item For the rest of the columns, the coordinator samples the inputs
  according to \textsf{IpSample} and sends the requisite inputs to the
  respective processors.
\item The processors form the input $\X\equiv \X(u,\Pp,\Qq)$
  (i.e., $\X_{\Pp}=u_1$ and $\X_{\Qq}=u_2$) and run the protocol
  $\Pi'$ for $\disj$ with $\X$ as input.
\end{enumerate}

\begin{observation}
	\label{OBS-1}
Consider the tuple $(\U,\N,\V,\Ss)$ distributed according to $\zeta_2$. If $\U$ is 
given as input to protocol $\Pi''$, then $(\X, \M, \Z, \W)$ is distributed according to $\zeta_\ell$, where $\W$ is the unique all 1's coordinate in $\X$. Here $\W=\Pp$ if $\V=1$ and $\W=\Qq$ if $\V=2$.
\end{observation}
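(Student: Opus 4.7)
The plan is to match the two joint distributions coordinate by coordinate, verifying that $\Pi''$ reproduces $\zeta_\ell$ in three stages: first the planted coordinate $\W$, then the auxiliary pair $(\M,\Z)$, and finally each column of $\X$ conditional on the previous. With the natural identification $(\M_\Pp,\Z_\Pp)=(\N_1,\Ss_1)$ and $(\M_\Qq,\Z_\Qq)=(\N_2,\Ss_2)$, together with the $(\M_t,\Z_t)$ that $\Pi''$ explicitly samples for $t\notin\{\Pp,\Qq\}$, the vectors $\M,\Z,\W$ become defined on all of $[\ell]$ and we can read off their joint law directly.

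For step one, I note that $\Pi''$ samples $\{\Pp,\Qq\}$ uniformly from the $\binom{\ell}{2}$ unordered pairs in $[\ell]$ and, through $\V\sim\mathrm{Unif}[2]$ inherited from $\zeta_2$, orients one element as $\W$. For any ordered pair $(w,j)$ with $w\neq j$, the event ``$\W=w$ and the other element of $\{\Pp,\Qq\}$ is $j$'' corresponds to the single outcome $\{\Pp,\Qq\}=\{w,j\}$ with $\V$ placing $w$ in the $\W$-slot, which occurs with probability $\binom{\ell}{2}^{-1}\cdot\frac{1}{2}=\frac{1}{\ell(\ell-1)}$. Summing over $j\neq w$ gives $\Pr[\W=w]=1/\ell$, so $\W$ is uniform on $[\ell]$ and, conditional on $\W$, the ``other'' element of $\{\Pp,\Qq\}$ is uniform on $[\ell]\setminus\{\W\}$ — exactly how $\zeta_\ell$ picks its planted coordinate.

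For step two, since $\zeta_2$ has $\N_1,\N_2$ iid $\mathsf{Bin}(1/3)$ and $\Ss_1,\Ss_2$ iid uniform on $[k]$, all independent of $\V$, the identifications above give the right marginals for $(\M_\Pp,\Z_\Pp)$ and $(\M_\Qq,\Z_\Qq)$. For $t\notin\{\Pp,\Qq\}$, protocol $\Pi''$ independently draws $\M_t\sim\mathsf{Bin}(1/3)$ and $\Z_t\sim\mathrm{Unif}[k]$ (the $\Tt$-split is irrelevant here: it only governs who physically samples each column, not the joint law). Combining these, $(\M_t,\Z_t)_{t\in[\ell]}$ is iid with the correct marginals and independent of $\W$, matching the $(\M,\Z)$-marginal of $\zeta_\ell$.

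For step three, I condition on $(\W,\M,\Z)$ and examine each column. Column $\W$ is embedded with $\U_\V$, which by definition of $\zeta_2$ is the planted all-ones column, giving $\X_\W=(1,\ldots,1)$ as required. The other column in $\{\Pp,\Qq\}$ is embedded with $\U_{3-\V}$, which in $\zeta_2$ is exactly $\textsf{IpSample}(\N_{3-\V},\Ss_{3-\V})=\textsf{IpSample}(\M_t,\Z_t)$ generated with fresh private randomness; all remaining columns are generated by $\Pi''$ via $\textsf{IpSample}(\M_t,\Z_t)$ with its own fresh randomness. Hence, conditional on $(\W,\M,\Z)$, the columns $\{\X_t : t\neq\W\}$ are mutually independent and each $\textsf{IpSample}(\M_t,\Z_t)$-distributed, which is precisely the $\X$-marginal of $\zeta_\ell$. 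Combining the three steps yields $(\X,\M,\Z,\W)\sim\zeta_\ell$. The only non-bookkeeping step is the first one, where the bijection between ``uniform unordered pair plus uniform orientation'' and ``uniform coordinate plus uniform other coordinate'' must be verified; everything else follows mechanically from the independence structure of $\zeta_2$ and of the additional randomness used in $\Pi''$.
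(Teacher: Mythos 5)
Your proposal is correct: the paper states Observation~\ref{OBS-1} without proof, and your three-stage verification (uniform unordered pair $\{\Pp,\Qq\}$ plus uniform orientation via $\V$ gives a uniform $\W$; the identification $(\M_\Pp,\Z_\Pp,\M_\Qq,\Z_\Qq)=(\N_1,\Ss_1,\N_2,\Ss_2)$ with the planted column's parameters being dummy in both $\zeta_2$ and $\zeta_\ell$; and the column-by-column \textsf{IpSample} match conditional on $(\W,\M,\Z)$, with the $\Tt$-split only deciding who samples) is precisely the routine check the paper leaves implicit. The side claim that $\W$ is the \emph{unique} all-ones coordinate also falls out of your step three, since every column $t\neq\W$ has a $0$ at processor $\Z_t$ under \textsf{IpSample}.
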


Next we prove the following lemma connecting the information cost of
$\Pi'$ for $\disj$ and that of $\Pi''$ for $\Disj$. This lemma implies the Theorem \ref{thm:moddirectsum}.

\begin{lemma}
  \label{lemma:mod-direct-sum}
  \begin{equation}
    \label{eq:6}
    \underset{(\U,\N,\V,\Ss) \sim \zeta_2}{\I}[\U^i_{-\V},\Pi''^i(\U)\ |\
    \N,\V,\Ss]  \leq  \frac{1}{\ell-1}\underset{(\X,\M,\W,\Z) \sim
      \zeta_\ell}{\I}[\X^i_{-\W},\Pi'^i(\X)\ |\ \M,\W,\Z],
  \end{equation}
and
  \begin{equation}
    \label{eq:7}
    \underset{(\U,\N,\V,\Ss) \sim \zeta_2}{\I}[\N_{-\V},\Pi''^i(\U)\ |\ \U^i,\V,\Ss] \leq 
    \frac{1}{\ell-1}\underset{(\X,\M,\W,\Z) \sim \zeta_\ell}{\I}[\M_{-\W},\Pi'^i(\X)\ |\ {\X}^i,\W,\Z].
  \end{equation}

\end{lemma}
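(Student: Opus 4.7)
The plan is to adapt the argument of Lemma~\ref{lemma:directsum1} to the $\disj$-to-$\Disj$ setting, where the random pair of planted columns $\{\Pp,\Qq\}$ plays the role of $\J$, and the random subset $\Tt$ plays the role of the chain-rule prefix $[1,\J-1]$. Following the convention used in the proof of Lemma~\ref{lemma:directsum1}, I will write $\Pi''^i(\U) = \langle E, \Pi'^i(\X)\rangle$ where
\[
E = \langle \Pp, \Qq, \Z_{-\{\Pp,\Qq\}}, \Tt, \M_\Tt, \X^i_{T^c}\rangle, \qquad T^c := [\ell]\setminus(\Tt\cup\{\Pp,\Qq\}),
\]
is the embedding information processor $i$ receives from the coordinator (the private samples $\X^i_\Tt$ and the input $\U^i$ are not part of $\Pi''^i$). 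By Observation~\ref{OBS-1}, the induced tuple is distributed as $\zeta_\ell$. Writing $J^*$ for the element of $\{\Pp,\Qq\}\setminus\{\W\}$, the target $\U^i_{-\V}$ corresponds to $\X^i_{J^*}$ and $\N_{-\V}$ to $\M_{J^*}$.

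For each of \eqref{eq:6} and \eqref{eq:7}, I would proceed in three steps. First, since $E$ is built from fresh randomness independent of the respective target ($\U^i_{-\V}$ or $\N_{-\V}$) given the outer conditioning, I invoke Eq.~\eqref{eq:30} of Lemma~\ref{lemma:condn-excg} to absorb $E$ into the conditioning, turning the LHS into a conditional mutual information between $\X^i_{J^*}$ (resp.\ $\M_{J^*}$) and $\Pi'^i(\X)$ with an enlarged conditioning set. Second, the enlarged conditioning contains only a partial portion of $\M$ (for \eqref{eq:6}, it is missing $\M_{T^c}$) or of $\X^i$ (for \eqref{eq:7}, it is missing $\X^i_\Tt$); I use Eq.~\eqref{eq:31} to add the missing pieces. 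This is an upper bound $\leq$, because under $\zeta_\ell$ the target bit lives on column $J^*$ while the added variables live on columns disjoint from $J^*$, so the required independence holds given the current conditioning.

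Finally, I identify the resulting expression with $\tfrac{1}{\ell-1}$ times the RHS via an averaged chain rule. The key observation is that the pair $(J^*, \Tt)$ built into $\Pi''$ has the same joint distribution as $(\sigma(j), \sigma([j-1]))$ where $\sigma$ is a uniformly random permutation of $[\ell]\setminus\{\W\}$ and $j$ is uniform on $[\ell-1]$: indeed, $J^*$ is uniform on $[\ell]\setminus\{\W\}$, and conditionally on $(\W, J^*)$ the subset $\Tt$ is uniform over subsets of $[\ell]\setminus\{\W, J^*\}$ whose size is uniform on $\{0,\ldots,\ell-2\}$. For \eqref{eq:7} this directly matches $\I[\M_{J^*}:\Pi'^i(\X)\mid \X^i,\W,\Z,\M_\Tt,J^*,\Tt]$ with $\tfrac{1}{\ell-1}\I[\M_{-\W}:\Pi'^i(\X)\mid \X^i,\W,\Z]$. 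For \eqref{eq:6} the processor's view naturally places $\X^i_{T^c}$ (not $\X^i_\Tt$) in the conditioning, so the same matching is carried out after observing that $T^c$ has the same marginal distribution as $\Tt$ by complement symmetry, and can therefore play the role of $\sigma([j-1])$. The main obstacle is the bookkeeping in Steps~1--2---correctly tracking which portions of $\M$ and $\X^i$ are and are not part of $E$ and verifying the precise independence conditions required by Eqs.~\eqref{eq:30} and~\eqref{eq:31}---together with, in the case of \eqref{eq:6}, the complement-symmetry step identifying $\X^i_{T^c}$ with the chain-rule prefix $\X^i_\Tt$.
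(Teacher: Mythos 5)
Your proposal is correct and follows essentially the same route as the paper's proof: absorb the embedded view into the conditioning via Eq.~\eqref{eq:30}, enlarge the conditioning by the missing $\M_{T^c}$ (resp.\ $\X^i_\Tt$) via Eq.~\eqref{eq:31}, and then recover the factor $\tfrac{1}{\ell-1}$ by the permutation-averaged chain rule (Eq.~\eqref{eq:49}), which is exactly the paper's $(\ell-r'-2)!\,r'!$ counting over permutations of $[\ell]\setminus\{\W\}$. Your complement-symmetry observation identifying $T^c=\bar{\Tt}\setminus\{\Pp,\Qq\}$ with the chain-rule prefix is precisely the paper's substitution $\Tt'=\bar{\Tt}\setminus\{\Pp,\Qq\}$, $\Rr'=(\ell-2)-\Rr$ used for Equation~\eqref{eq:6}.
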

\begin{proof}
  We consider the LHS of Equation \eqref{eq:7}. The view of processor
  $i$ of the transcript of protocol $\Pi''$, denoted as $\Pi''^i(\U)$,
  is given as follows.
  \begin{equation}
    \label{eq:8}
    \Pi''^i(\U)= \langle \Pp, \Qq, \Z_{-\Pp,\Qq}, \Rr, \Tt, \M_\Tt,
    \X^i_{\bar{\Tt}\backslash\{\Pp,\Qq\}}, \Pi'(\X(
  \Pp,\Qq, \U)) )\rangle.
  \end{equation}
So the LHS of Equation \eqref{eq:7} can be written as 
\begin{align}
  & \underset{(\U,\N,\V,\Ss) \sim \zeta_2}{\I}[\N_{-\V}:\Pi''^i(\U)\
  |\
  \U^i,\V,\Ss] \nonumber \\
  &= \underset{\underset{(\X,\M,\Z)\sim \zeta_\ell\downarrow_{\X,\M,\Z}}{(\U,\N,\V,\Ss) \sim \zeta_2}}{\I}[\N_{-\V}: \Pp, \Qq, \Z_{-\Pp,\Qq}, \Tt, \M_\Tt, \Rr,
  \X^i_{\bar{\Tt}\backslash\{\Pp,\Qq\}}, \Pi'^i(\X(
  \Pp,\Qq, \U))|\U^i, \V,\Ss]  \nonumber \\
  &= \underset{\underset{(\X,\M,\Z)\sim \zeta_\ell\downarrow_{\X,\M,\Z}}{(\U,\N,\V,\Ss) \sim \zeta_2}}{\I} [ \N_{-\V}: \Pi'^i(\X(
  \Pp,\Qq, \U))|\Pp, \Qq, \Z_{-\Pp,\Qq}, \Tt,
  \M_\Tt, \Rr, \X^i_{\bar{\Tt}\backslash\{\Pp,\Qq\}}, \U^i, \V,\Ss]
  \tag*{\text{[Lemma
      \ref{lemma:condn-excg} eqn. \eqref{eq:30}]}}\nonumber \\
  &=  \underset{\underset{(\X,\M,\Z)\sim \zeta_\ell\downarrow_{\X,\M,\Z}}{(\U,\N,\V,\Ss) \sim \zeta_2}}{\I}[\N_{-\V}: \Pi'^i(\X)\ |\ \Pp, \Qq, \Rr, \Tt, \M_\Tt, \Z, \V,
  \X^i_{\bar{\Tt}}] \tag*{\text{[Combining $(\U^i,
      \X^i_{\bar{\Tt}\backslash\{\Pp,\Qq\}})$
      and $(\Z_{-\Pp,\Qq},\Ss)$]}}\nonumber \\
  &\leq  \underset{\underset{(\X,\M,\Z)\sim \zeta_\ell\downarrow_{\X,\M,\Z}}{(\U,\N,\V,\Ss) \sim \zeta_2}}{\I}[\N_{-\V}: \Pi'^i(\X)\ |\ \Pp, \Qq, \Rr, \Tt, \M_\Tt, \Z,
  \V, \X^i]\tag*{\text{[Lemma \ref{lemma:condn-excg} eqn.
      \eqref{eq:31}, $\X^i_\Ss$ ind. of $\N_{-\V}$]}}  \nonumber
\end{align}

 \noindent [$V$ takes value in $1$ and $2$ uniformly at
     random. Hence we can write it as follows.]
\begin{align} 
\label{eq:9}
  =\frac{1}{2}&\underset{(\X,\M,\Z)\sim \zeta_\ell\downarrow_{\X,\M,\Z}}{\I}[\M_\Pp: \Pi'^i(\X)\ |\ \Pp, \Qq,\Rr, \Tt, \M_\Tt, \Z, \V=2,
  \X^i]\nonumber \\
  &+ \frac{1}{2}\underset{(\X,\M,\Z)\sim \zeta_\ell\downarrow_{\X,\M,\Z}
  }{\I}[\M_\Qq: \Pi'^i(\X)\ |\ \Pp,\Qq, \Rr, \Tt, \M_\Tt, \Z, \V=1,
  \X^i].
\end{align}
Consider the first mutual information term. 
\begin{align}
  \label{eq:10}
  & \I[\M_\Pp: \Pi'^i(\X)\ |\ \Pp, \Qq, \Rr, \Tt, \M_\Tt, \Z, \V=2,
  \X^i]\nonumber \\
  &=\frac{2}{\ell(\ell-1)}\sum_{p<q} \I[\M_p: \Pi'^i(\X)\ |\
  \Pp=p,\Qq=q, \Rr, \Tt, \M_\Tt, \Z, \V=2,
  \X^i] \nonumber  \\
  &=\frac{2}{\ell(\ell-1)^2}\sum_{p<q}
  \sum_{r=0}^{\ell-2}\sum_{t:|t|=r}\Pr[\Tt=t]\I[\M_p: \Pi'^i(\X)\ |\
  p, q, r, t, \M_t, \Z, \V=2,
  \X^i]\nonumber \\
  &=\frac{2}{\ell(\ell-1)^2}\sum_{p<q}
  \sum_{r=0}^{\ell-2}\sum_{t:|t|=r}\frac{(\ell-r-2)!r!}{(\ell-2)!}\I[\M_p:
  \Pi'^i(\X)\ |\ p,q,r,t,\M_t, \Z, \V=2,
  \X^i].\nonumber \\
  \intertext{We can safely drop the conditioning $\Pp=p, \Rr=r, \Tt=t$
    and $\V=2$ in the following way. It is easy to see $\Rr=r, \Tt=t$ is implied by
    $\M_t$. $\M_p$ implies $\Pp=p$. Moreover, given $(p,q)$, $\V=2$ is equivalent to $\W=p$. So we can write,} & \I[\M_\Pp: \Pi'^i(\X)\ |\ \Pp, \Qq, \Rr, \Tt, \M_\Tt, \Z,
  \V=2,
  \X^i]\nonumber \\
  &=\frac{2}{(\ell-1)!\ell(\ell-1)}\sum_q\sum_{p:p<q}
  \sum_{r=0}^{\ell-2}\sum_{t:|t|=r}((\ell-r-2)!r!)\I[\M_p: \Pi'^i(X)\ |\
  \W=q, \M_t, \Z,
  \X^i].
\end{align}
  Similarly, the second mutual information term of Equation
    \eqref{eq:9} term can be written in the following way. 
\begin{align}
  \label{eq:100}
 &\I[\M_\Qq: \Pi'^i(X)\ |\ \Pp, \Qq, \Rr, \Tt, \M_\Tt, \Z, \V=1,
  \X^i]\nonumber \\
  &=\frac{2}{(\ell-1)!\ell(\ell-1)}\sum_q\sum_{p:p<q}
  \sum_{r=0}^{\ell-2}\sum_{t:|t|=r}((\ell-r-2)!r!)\I[\M_q: \Pi'^i(\X)\ |\
  \W=p, \M_t, \Z,
  \X^i].
\end{align}

Combining Equation \eqref{eq:9}, \eqref{eq:10},
  \eqref{eq:100}, we get,
\begin{align}
  & \I_{(\U,\N,\V,\Ss) \sim
    \zeta_2}[\N_{-\V},\Pi'^i(\U)\ |\
  \U^i,\V,\Ss] \nonumber \\
  &\leq\frac{1}{(\ell-1)!\ell(\ell-1)}\sum_{q'}\sum_{p':p'\neq q'}
  \sum_{r=0}^{\ell-2}\sum_{t:|t|=r}((\ell-r-2)!r!)\I[\M_{p'}: \Pi'^i(X)|\W=q',
  \M_t, \Z,
  \X^i]\nonumber
  \intertext{The
number of permutations of $[\ell]\backslash q$ where the $r+1$th element is $p'$ and the first $r$ elements constitute the set $t$ is $(\ell-r-2)!r!$. Hence we can
write the previous summation as follows,}
  &=\frac{1}{(\ell-1)!\ell(\ell-1)}\sum_{q'}\sum_{\sigma \in
    \mathcal{S}_{[\ell]\backslash q'}}\sum_{i\in [\ell]\backslash
    q'}\I[\M_{\sigma(i)}: \Pi'^i(\X)\ |\ \M_{\{\sigma(1),...,\sigma(i-1)\}},
  \Z, \W=q',
  \X^i]\nonumber \\
  &= \frac{1}{(\ell-1)!\ell(\ell-1)} \sum_{q'}\sum_{\sigma \in
    \mathcal{S}_{[\ell]\backslash q'}}\I[\M_{-q'}: \Pi'^i(\X)\ |\ \Z, \W=q',
  \X^i]\tag*{[Using chain rule of information,
    Eq. \eqref{eq:49}]}\nonumber\\
  &= \frac{1}{\ell-1}\sum_{q'}\frac{1}{\ell}\underset{(\X,\M,\Z)\sim\zeta_\ell\downarrow_{\X,\M,\Z}}{\I}[\M_{-q'}: \Pi'^i(\X)\ |\ \Z, \W=q',
  \X^i]\nonumber \\
  &= \frac{1}{\ell-1}\underset{(\X,\M,\Z,\W)\sim\zeta_\ell}{\I}[\M_{-\W}: \Pi'^i(\X)\ |\ \Z, \W, \X^i].\label{eq:110}
\end{align}

Equation \eqref{eq:6} can be proved almost similarly. Let's look at the LHS. 
\begin{align}
  & \underset{(\U,\N,\V,\Ss) \sim \zeta_2}{\I}[\U^i_{-\V}:\Pi''^i(\U)\ |\ \N,\V,\Ss] \nonumber\\
  &= \underset{\underset{(\X,\M,\Z)\sim \zeta_\ell\downarrow_{\X,\M,\Z}}{(\U,\N,\V,\Ss) \sim \zeta_2}}{\I}[\U^i_{-\V}: \langle \Pp, \Qq, \Z_{-\Pp,\Qq}, \M_\Tt, \Rr, \Tt,
  \X^i_{\bar{\Tt}\backslash\{\Pp,\Qq\}}, \Pi'^i(\X(
  \Pp,\Qq, \U)) \rangle\ |\ \N, \V,\Ss]\nonumber \\
  &= \underset{\underset{(\X,\M,\Z)\sim \zeta_\ell\downarrow_{\X,\M,\Z}}{(\U,\N,\V,\Ss) \sim \zeta_2}}{\I} [ \U^i_{-\V}: \Pi'^i(\X(
  \Pp,\Qq, \U))\ |\ \Pp, \Qq, \Z, \M_\Tt, \Rr, \Tt,
  \X^i_{\bar{\Tt}\backslash\{\Pp,\Qq\}},  \N, \V,\Ss] \tag*{\text{[Lemma
    \ref{lemma:condn-excg} Equation \eqref{eq:30}]}}\nonumber \\
  &= \underset{\underset{(\X,\M,\Z)\sim \zeta_\ell\downarrow_{\X,\M,\Z}}{(\U,\N,\V,\Ss) \sim \zeta_2}}{\I}[\U^i_{-\V}: \Pi'^i(\X)\ |\ \Pp, \Qq, \Rr, \Tt, \M_{\Tt\cup\{p,q\}}, \Z, \V,
 \X^i_{\bar{\Tt}\backslash\{\Pp,\Qq\}}]\nonumber \\
  &\leq\underset{\underset{(\X,\M,\Z)\sim \zeta_\ell\downarrow_{\X,\M,\Z}}{(\U,\N,\V,\Ss) \sim \zeta_2}}{\I}[\U^i_{-\V}: \Pi'^i(\X)\ |\ \Pp, \Qq, \Rr, \Tt, \M, \Z, \V,
  \X^i_{\bar{\Tt}\backslash\{\Pp,\Qq\}}] \tag*{\text{[Lemma \ref{lemma:condn-excg} Equation
    \eqref{eq:31}]}}\nonumber \\ 
  &=\frac{1}{2}\underset{(\X,\M,\Z)\sim \zeta_\ell\downarrow_{\X,\M,\Z}}{\I}[\X^i_\Pp: \Pi'^i(\X)\ |\ \Pp, \Qq, \Rr, \Tt, \M, \Z, \V=2,
  \X^i_{\bar{\Tt}\backslash\{\Pp,\Qq\}}]\nonumber\\
  &+ \frac{1}{2}\underset{(\X,\M,\Z)\sim \zeta_\ell\downarrow_{\X,\M,\Z}}{\I}[\X^i_\Qq: \Pi'^i(\X)\ |\ \Pp,\Qq,
  \Rr, \Tt, \M, \Z, \V=1,
  \X^i_{\bar{\Tt}\backslash\{\Pp,\Qq\}}]. \label{eq:12}
\end{align}
Consider $\Tt'=\bar{\Tt}\backslash\{\Pp,\Qq\}$ which is distributed in the same
way as $\Tt$. Here the auxiliary variable for size of $\Tt'$ is $\Rr'=(\ell-2)-\Rr$
instead of $\Rr$. $\Rr'$ is distributed in the same  way as $\Rr$. We will
analyze, as before, the first term of \eqref{eq:12}.

 \begin{align}
   \label{eq:13}
   & \I[\X^i_\Pp: \Pi'^i(\X)\ |\ \Pp, \Qq, \Rr', \Tt', \M, \Z, \V=2,
   \X^i_{\Tt'}]\nonumber\\
   &=\frac{2}{\ell(\ell-1)}\sum_{p<q} \I[\X^i_p: \Pi'^i(\X)\ |\ \Pp=p, \Qq=q, \Rr', \Tt'  \M, \Z, \V=2,   \X^i_{\Tt'}] \nonumber \\
   &=\frac{2}{\ell(\ell -1)^2}\sum_{p<q}
   \sum_{r'=0}^{\ell -2}\sum_{t':|t'|=r'}\Pr[\Tt'=t']\I[\X^i_p: \Pi'^i(\X)\ |\
   p,q,r',t', \M, \Z, \V=2,
   \X^i_{t'}]\nonumber \\
   &=\frac{2}{\ell (\ell -1)^2}\sum_q\sum_{p:p<q}
   \sum_{r'=0}^{\ell -2}\sum_{t:|t|=r'}\frac{(\ell -r'-2)!r'!}{(\ell -2)!}\I[\X^i_p:
   \Pi'^i(X)\ |\ \W=q,\M, \Z,
   \X^i_{t'}].\nonumber \\
   \intertext{The last equality follows by dropping the implied
     conditioning and introducing random variable $\W$ to denote the
      all $1$ coordinate. As before we combine the second term of the Equation
     \eqref{eq:12} to get the following.}  
   &\I_{(\U,\N,\V,\Ss) \sim
     \zeta_2}[\U^i_{-\V}:\Pi''^i(\U)\ |\ \N,\V,\Ss] \nonumber \\
   &\leq \frac{1}{(\ell -1)!\ell (\ell -1)}\sum_{\underset{p\neq q}{(p.q):}}
   \sum_{r'=0}^{\ell -2}\sum_{t:|t|=r'}((\ell -r'-2)!r'!)\I[\X^i_p: \Pi'^i(\X)\ |\
   \W=q, \M, \Z, 
   \X^i_{t'}]\nonumber \\
   &= \frac{1}{(\ell -1)!\ell (\ell -1)}\sum_q \sum_{\sigma \in
     \mathcal{S}_{[\ell ]\backslash q}}\sum_{i\in [\ell ]\backslash
     q}\I[\X^i_{\sigma(i)}: \Pi'^i(\X)\ |\
   X^i_{\{\sigma(1),...,\sigma(i-1)\}}, \W=q, \Z,
   \M]\nonumber \\
   &= \frac{1}{(\ell -1)!\ell (\ell -1)} \sum_q \sum_{\sigma \in
     \mathcal{S}_{\ell -1}}\I[\X^i_{-\W}: \Pi'^i(\X)\ |\ \Z, \W=q,
   \M]\nonumber\\
   &= \frac{1}{\ell -1}\sum_q \frac{1}{\ell}\underset{(\X,\M,\Z)\sim \zeta_\ell\downarrow_{\X,\M,\Z}}{\I}[\X^i_{-\W}: \Pi'^i(\X)\ |\ \Z, \W=q,
   \M]\nonumber\\
   &= \frac{1}{\ell -1}\underset{(\X,\M,\Z,\W)\sim\zeta_\ell}{\I}[\X^i_{-\W}: \Pi'^i(\X)\ |\ \Z, \W, \M].
 \end{align}

Equation \eqref{eq:110} and Equation \eqref{eq:13} prove the lemma.
\end{proof}

Lemma \ref{lemma:mod-direct-sum} implies the Theorem \ref{thm:moddirectsum}.

\subsection{Lower bounding $\Disj$}
\label{sec:lower-bounding-disj}

In this section we prove the following lower bound for the partial information
complexity of $\Disj$.

\begin{theorem}
	\label{LB-DISJ2}
	$\mathsf{PIC}_{\zeta_2} = \Omega(k).$
\end{theorem}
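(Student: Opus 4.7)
The plan is to prove $\mathsf{PIC}_{\zeta_2}(\Disj) = \Omega(k)$ by reducing to the $\Omega(k)$ lower bound of Braverman et al.~\cite{DBLP:conf/focs/BravermanEOPV13} on $\mathsf{IC}^0_\tau(\AND)$, using a JKS-style planting argument to bridge the yes-distribution $\zeta_2$ and the $\AND$ no-distribution.

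I would first unpack $\mathsf{PIC}_{\zeta_2}(\Pi)$ by conditioning on $\W$. Since $\W$ is uniform on $\{1,2\}$ and the two columns play symmetric roles, $\mathsf{PIC}_{\zeta_2}(\Pi)$ equals the average over $w \in \{1,2\}$ of $\sum_i \bigl( \I[\M_{-w}:\Pi^i \mid \X^i, \Z, \W=w] + \I[\X^i_{-w}:\Pi^i \mid \M, \Z, \W=w] \bigr)$. Under $\W=w$, column $w$ of $\X$ is deterministically the all-ones vector, so the ghost auxiliaries $\M_w, \Z_w$ are independent of every random variable visible to the protocol and can be dropped from the conditioning; similarly $\X^i_w = 1$ is deterministic and contributes nothing. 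What remains is exactly the Braverman-style information cost of $\Pi$ restricted to column $-w$, measured with column $w$ pinned to all ones.

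Next I would construct a protocol $\tilde{\Pi}$ for $\AND$ from $\Pi$. On input $u \sim \tau$ (the single-column Braverman distribution for $\AND$), the coordinator samples $\V \in \{1,2\}$ uniformly and samples and broadcasts a fresh $\tau$-column for position $\V$, together with its auxiliaries $\M_\V, \Z_\V$. The players form the matrix $\X$ with column $\V$ equal to the fresh sample and column $-\V$ equal to $u$, and then run $\Pi$ on $\X$. Because the fresh column has its $\Z_\V$-th entry pinned to $0$, $\AND(\X^{:,\V}) = 0$, hence $\Disj(\X) = \AND(u)$, so $\tilde{\Pi}$ correctly computes $\AND$ with the same error as $\Pi$. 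I would then argue, via a Hellinger-distance coupling in the spirit of \cite{DBLP:conf/stoc/JayramKS03,DBLP:conf/focs/BravermanEOPV13}, that $\mathsf{PIC}_{\zeta_2}(\Pi) \geq \mathsf{IC}^0_\tau(\tilde{\Pi})$. The key observation is that the marginal of column $-\V$ is $\tau$ under both $\zeta_2\mid\W=\V$ and the reduction's $\tau \times \tau$ distribution, while the correctness of $\Pi$ on both a yes-input (like $\zeta_2$) and a no-input (like $\tau \times \tau$) forces the two induced transcript distributions on column $-\V$ to carry at least as much information under $\zeta_2$ as under the reduction.

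Finally, applying the coordinator-model lower bound $\mathsf{IC}^0_\tau(\AND) = \Omega(k)$ of Braverman et al.---itself a per-processor Hellinger-distance argument using the Pythagorean property---yields $\mathsf{PIC}_{\zeta_2}(\Disj) \geq \mathsf{IC}^0_\tau(\AND) = \Omega(k)$. The hard part will be the information-cost comparison in the previous paragraph: one must show that replacing column $\W$ by all ones (as in $\zeta_2$) rather than a fresh $\tau$-sample (as in $\tau \times \tau$) does not decrease the information the protocol leaks about the other column. The partial information cost measure was introduced precisely to make this bridging possible (following JKS), but the formal argument still requires careful tracking of the ghost auxiliaries $\M_\W, \Z_\W$ and of the reduction's public randomness $\V$, together with a Hellinger-distance-based correctness argument of the sort pioneered by Bar-Yossef et al.~and extended to the coordinator model by Braverman et al.
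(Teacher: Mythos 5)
There is a genuine gap, and it sits exactly where you yourself flag ``the hard part'': the asserted comparison $\mathsf{PIC}_{\zeta_2}(\Pi) \geq \mathsf{IC}^0_\tau(\tilde{\Pi})$. The two sides measure information about the same $\tau$-distributed column, but through transcripts of $\Pi$ run under two \emph{different} conditionings of the other column: in $\mathsf{PIC}_{\zeta_2}$ the planted column is all ones ($\bar{e}$), while in your reduction it is a fresh $\tau$-sample with a pinned zero (almost surely a zero-column of $\AND$). Mutual information is not monotone under such a change of the underlying input distribution, and no data-processing or coupling argument hands you the needed direction: the all-ones conditioning is precisely the one a protocol might hope to exploit (with one column all ones the answer is already $1$, so a protocol that could cheaply certify this could stop while revealing little about the $\tau$-column), and showing that it cannot do so cheaply is the content of the theorem, not an available lemma. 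So invoking the bound of \cite{DBLP:conf/focs/BravermanEOPV13} on $\mathsf{IC}^0_\tau(\AND)$ as a black box does not close the argument; the ``bridging'' step is essentially the whole difficulty restated, and as stated it is not known (and is not proved in the paper) to hold.

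The paper's route is different and avoids any cross-distribution comparison of information costs. Each term of $\mathsf{PIC}_{\zeta_2}$ is first lower-bounded (Claim~\ref{claim}, via Fact~\ref{fac:mutInfo-HellDist}) by squared Hellinger distances between transcript distributions on \emph{fixed} inputs in which the planted column is explicitly $\bar{e}$, e.g.\ $h^2(\Pi^i[1,0,j;\bar{e}],\Pi^i[1,1,j;\bar{e}])$. Then the diagonal lemma (Lemma~\ref{lemma:diagonal}), the global-to-local lemma (Lemma~\ref{lemma:glob2loc}), cut-and-paste (Lemma~\ref{lemma:cutpaste}) and the Pythagorean property (Lemma~\ref{lemma:pythagorian}), together with triangle inequality and Cauchy--Schwarz, convert these into distances between full transcripts on pairs such as $\bar{e}\cdot\bar{e}_{i,j}$ versus $\bar{e}_j\cdot\bar{e}_i$, which have different $\Disj$ values; soundness of $\Pi$ on those zero-inputs, which lie outside the support of $\zeta_2$ (Fact~\ref{fac:StatDist-1}), then yields roughly $\frac{k-1}{384}(1-\delta)^2=\Omega(k)$ after summing over $i\neq j$. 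In short, one must redo a BEOPV-style Hellinger argument adapted to the all-ones planted column rather than cite the $\AND$ bound as a black box. Your reduction set-up and the correctness claim for $\tilde{\Pi}$ are fine, but without a proof of the information comparison your argument does not go through.
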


This, combined with Theorem
\ref{thm:moddirectsum} and Theorem \ref{thm:direct-sum1} will imply a
$\Omega(m\ell k)$ lower bound on the switched information complexity of
$\tribes$ which is the lower bound on $R_\delta(\tribes)$ we aimed
for. 

\bigskip
\noindent
\textbf{Notation.}  By $\bar{e}$ we mean the all $1$ vector of size
$k$. By $\bar{e}_{i,j}$, we mean the boolean vector of
size $k$ where all entries are $1$ except the entries in index $i$ and
$j$. Similarly, $\bar{e}_i$ is the boolean vector the all entries are
$1$ except that of index $i$. $\Pi[i,x,m,z;\bar{e}_i]$ implies the
transcript of the protocol $\Pi$ on the following $\Disj$ instance:
the input of the first column comes from the distribution specified by
$\M=m$, $\Z=z$ and $\X_i=x$ and the input of the second column is
$\bar{e}_i$. Abusing notation slightly, $\Pi^i[x,m,z;\bar{e}_i]$
represents processor-$i$'s view of the transcript of the protocol
$\Pi$ when the input of the first column comes from the distribution
specified by $\X_i=x, \M=m$ and $\Z=z$ and the input of the second
column is $\bar{e}_i$.

\bigskip
\noindent
\textbf{Hellinger distance.}
For probability distributions $P$ and $Q$ supported on a sample space
$\Omega$, the Hellinger distance between $P$ and $Q$, denoted as
$h(P,Q)$, is defined as,
\begin{equation}
  \label{eq:40}
  h(P,Q) = \frac{1}{\sqrt{2}} ||\sqrt{P} - \sqrt{Q}||_2.
\end{equation}

\noindent Hellinger distance can also be written as follows,
\begin{equation}
  \label{eq:41}
  h(P,Q)^2=1-F(P,Q),
\end{equation}
where $F(P,Q) = \sum_{\omega \in \Omega}\sqrt{P(\omega)Q(\omega)}$ is
also known as Bhattacharya coefficient. From the definition it is
clear to see that the Hellinger distance is a metric satisfying
triangle inequality. Below  we will
state some facts (without proof) about Hellinger distance. Interested
readers can refer to \cite{DBLP:journals/jcss/Bar-YossefJKS04} for the
proofs.

We will denote the statistical distance between two distributions $P$
and $Q$ as $\Delta(P,Q)$ and the Hellinger distance between $P$ and
$Q$ as $h(P,Q)$.

\begin{fact}[Hellinger vs. statistical distance]
  \label{fac:StatDist-2}
\begin{equation}
  \label{eq:42}
  h(P,Q) \leq \Delta(P,Q) \leq \sqrt{2}h(P,Q).
\end{equation}
\end{fact}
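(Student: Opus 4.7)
The plan is to prove each of the two inequalities separately, using only the defining formulas for $h$ and $\Delta$ together with two elementary tools (Cauchy--Schwarz and a pointwise comparison). The key observation that drives both halves is the factorization
\[
|P(\omega) - Q(\omega)| \;=\; \bigl|\sqrt{P(\omega)} - \sqrt{Q(\omega)}\bigr| \cdot \bigl(\sqrt{P(\omega)} + \sqrt{Q(\omega)}\bigr),
\]
which links the $\ell_1$ quantity $\Delta(P,Q) = \tfrac{1}{2}\sum_\omega |P(\omega) - Q(\omega)|$ to the $\ell_2$-type quantity $h(P,Q)^2 = \tfrac{1}{2}\sum_\omega(\sqrt P - \sqrt Q)^2$ underlying the Hellinger distance, and lets us switch between the two norms via standard inequalities.

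For the upper bound $\Delta(P,Q) \le \sqrt{2}\,h(P,Q)$, I would apply Cauchy--Schwarz to the factorization above, viewing $|\sqrt P - \sqrt Q|$ and $\sqrt P + \sqrt Q$ as two vectors indexed by $\omega$. This gives
\[
\Bigl(\sum_\omega |P(\omega) - Q(\omega)|\Bigr)^2 \;\le\; \Bigl(\sum_\omega (\sqrt P - \sqrt Q)^2\Bigr) \cdot \Bigl(\sum_\omega (\sqrt P + \sqrt Q)^2\Bigr).
\]
The first factor on the right equals $2\,h(P,Q)^2$ by definition. The second factor is at most $2\sum_\omega(P + Q) = 4$ using the elementary bound $(a+b)^2 \le 2(a^2 + b^2)$. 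Combining these and taking square roots yields $\sum_\omega|P - Q| \le 2\sqrt{2}\,h(P,Q)$, and dividing by $2$ finishes this half.

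For the lower bound, I would use the pointwise inequality $\bigl|\sqrt{P(\omega)} - \sqrt{Q(\omega)}\bigr| \le \sqrt{P(\omega)} + \sqrt{Q(\omega)}$, which is immediate since the left side is dominated by the larger of the two non-negative summands. Multiplying both sides by $\bigl|\sqrt P - \sqrt Q\bigr|$ and invoking the factorization gives $(\sqrt P - \sqrt Q)^2 \le |P - Q|$ pointwise. Summing over $\omega$ and dividing by $2$ then bounds $h(P,Q)$ by $\Delta(P,Q)$ in the form one obtains directly from these definitions. No genuine technical obstacle arises here; the whole argument is a pairing of the factorization with one application of Cauchy--Schwarz (for the upper bound) and one termwise comparison (for the lower bound). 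The only judgement call is recognizing that the $(\sqrt P + \sqrt Q)^2$ normalization is precisely what makes the upper-bound constant come out to $\sqrt{2}$ via the crude estimate $\sum(\sqrt P + \sqrt Q)^2 \le 4$, which is essentially tight when $P$ and $Q$ are supported near two far-apart atoms.
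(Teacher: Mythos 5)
Your upper bound is fine and is the standard argument: the factorization $|P-Q|=|\sqrt P-\sqrt Q|\,(\sqrt P+\sqrt Q)$, Cauchy--Schwarz, and $\sum_\omega(\sqrt P+\sqrt Q)^2\le 2\sum_\omega(P+Q)=4$ indeed give $\Delta(P,Q)\le\sqrt2\,h(P,Q)$. (The paper itself gives no proof of this fact, deferring to Bar-Yossef et al.; their proof is this same computation, so there is nothing to compare on that side.)

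The lower half, however, does not prove the inequality as displayed. Your pointwise bound $(\sqrt P-\sqrt Q)^2\le|P-Q|$, summed over $\omega$ and halved, yields $h^2(P,Q)\le\Delta(P,Q)$, not $h(P,Q)\le\Delta(P,Q)$; since $h\le 1$ one only gets $h^2\le h$, which points the wrong way, and your phrase ``bounds $h(P,Q)$ by $\Delta(P,Q)$ in the form one obtains directly from these definitions'' glosses over exactly this step. In fact the left inequality as printed in the paper is false in general: on a two-point space take $P=(1,0)$ and $Q=(1-\epsilon,\epsilon)$; then $\Delta(P,Q)=\epsilon$ while $h(P,Q)=\sqrt{1-\sqrt{1-\epsilon}}\approx\sqrt{\epsilon/2}$, which exceeds $\epsilon$ for small $\epsilon$. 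The correct statement (and the one in the cited reference, which is also all that is used downstream) is $h^2(P,Q)\le\Delta(P,Q)\le\sqrt2\,h(P,Q)$. So your argument is the right proof of the right fact, but as a proof of the literal statement it has a genuine gap; the honest fix is to flag that the left-hand side should be $h^2(P,Q)$ rather than to claim the stated bound follows.
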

This essentially means that the Hellinger distance is good
approximation of statistical distance. The following facts gives us the
necessary connection between mutual information and Hellinger
distance.

\begin{fact}[Hellinger vs information \cite{DBLP:journals/tit/Lin91}]
  \label{fac:mutInfo-HellDist}
 Let $X$ be a
random variable taking value in $\{x_1, x_2\}$ equally likely and let
$\Pi$ be a randomized protocol which takes $X$ as input. Then,
\begin{equation}
  \label{eq:43}
  \I[X:\Pi(X)] \geq h^2(\Pi(x_1),\Pi(x_2)). 
\end{equation}
\end{fact}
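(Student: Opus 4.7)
The plan is to rewrite $\I[X:\Pi(X)]$ as a Jensen--Shannon divergence between the transcript distributions on the two inputs, and then to reduce the desired Hellinger inequality to an elementary pointwise statement about the binary entropy function.

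First I would identify the left-hand side. Let $P_i$ denote the distribution (over the private coins of the protocol) of the transcript $\Pi(x_i)$ for $i \in \{1, 2\}$, and let $M := \tfrac{1}{2}(P_1 + P_2)$ be the marginal of $\Pi(X)$. Expanding $\I[X:\Pi(X)] = \h(\Pi(X)) - \h(\Pi(X) \mid X)$ and using that $X$ is uniform over two values gives
\begin{equation*}
\I[X:\Pi(X)] \;=\; \h(M) - \tfrac{1}{2}\h(P_1) - \tfrac{1}{2}\h(P_2) \;=\; \tfrac{1}{2}\sum_\omega \left[ P_1(\omega)\log\tfrac{2P_1(\omega)}{P_1(\omega)+P_2(\omega)} + P_2(\omega)\log\tfrac{2P_2(\omega)}{P_1(\omega)+P_2(\omega)} \right],
\end{equation*}
i.e., the Jensen--Shannon divergence of $P_1$ and $P_2$. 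On the right-hand side, the identity $\sum_\omega (\sqrt{P_1(\omega)} - \sqrt{P_2(\omega)})^2 = 2 - 2F(P_1,P_2) = 2 h^2(P_1,P_2)$ follows directly from \eqref{eq:41}.

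Next I would compare the two sums termwise: it suffices to prove the pointwise inequality
\begin{equation*}
a\log\tfrac{2a}{a+b} + b\log\tfrac{2b}{a+b} \;\geq\; (\sqrt{a} - \sqrt{b})^2 \qquad \text{for all } a, b \geq 0.
\end{equation*}
Setting $s := a+b$ and $\alpha := a/s \in [0,1]$, both sides are positive multiples of $s$, and (working in bits) the inequality simplifies to
\begin{equation*}
1 - h(\alpha) \;\geq\; 1 - 2\sqrt{\alpha(1-\alpha)}, \qquad \text{equivalently} \qquad h(\alpha) \;\leq\; 2\sqrt{\alpha(1-\alpha)},
\end{equation*}
where $h(\alpha) := -\alpha\log\alpha - (1-\alpha)\log(1-\alpha)$ is the binary entropy.

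The only non-routine step is this one-variable inequality $h(\alpha) \leq 2\sqrt{\alpha(1-\alpha)}$, which is the main (modest) obstacle. I would prove it by elementary calculus on the symmetric difference $g(\alpha) := 2\sqrt{\alpha(1-\alpha)} - h(\alpha)$: one checks $g(0) = g(1/2) = g(1) = 0$, computes $g'(\alpha) = (1 - 2\alpha)/\sqrt{\alpha(1-\alpha)} - \log\bigl((1-\alpha)/\alpha\bigr)$, and a short sign analysis shows $g \geq 0$ on $[0, 1]$. Since the paper attributes the fact to Lin \cite{DBLP:journals/tit/Lin91}, one may alternatively quote the binary-entropy inequality directly from there. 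Combining both steps yields $\I[X:\Pi(X)] \geq h^2(\Pi(x_1), \Pi(x_2))$, as required.
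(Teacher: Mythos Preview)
Your argument is correct and is essentially the standard route: identify $\I[X:\Pi(X)]$ with the Jensen--Shannon divergence $\mathrm{JS}(P_1,P_2)$, compare it termwise with $h^2(P_1,P_2)=\tfrac12\sum_\omega(\sqrt{P_1(\omega)}-\sqrt{P_2(\omega)})^2$, and reduce to the scalar inequality $h(\alpha)\le 2\sqrt{\alpha(1-\alpha)}$ for the binary entropy. There is nothing in the paper to compare your proof against: the paper states this as a \emph{Fact} with a citation to Lin~\cite{DBLP:journals/tit/Lin91} and gives no proof of its own, so your write-up actually supplies what the paper omits.

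Two minor remarks. First, your reduction to $h(\alpha)\le 2\sqrt{\alpha(1-\alpha)}$ uses logarithms base $2$; this is the intended convention (with natural logarithms the stated inequality $\I\ge h^2$ is false, e.g.\ when $P_1,P_2$ have disjoint supports), so it is worth saying explicitly rather than parenthetically. Second, the phrase ``a short sign analysis shows $g\ge 0$'' is doing real work: since $g(0)=g(1/2)=g(1)=0$, one must actually argue that $g'$ changes sign only once on $(0,1/2)$ (for instance by checking that $g'(0^+)=+\infty$, $g'(1/2)=0$, and that $g'$ is monotone via a second-derivative computation, or by substituting $\alpha=\sin^2\theta$). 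This is routine but not automatic, so either flesh it out or, as you suggest, cite the inequality directly from~\cite{DBLP:journals/tit/Lin91}.
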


\begin{fact}
  \label{fac:StatDist-1}
Let $\Pi$ be a $\delta$-error protocol for function $f$. For inputs
$x$ and $y$ such that $f(x) \neq f(y)$, we have,
\begin{equation}
  \label{eq:75}
  h(\Pi(x), \Pi(y)) = \frac{1-\delta}{\sqrt{2}}.
\end{equation}
\end{fact}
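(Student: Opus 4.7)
The plan is to prove the natural lower-bound form of this statement. Taken literally the printed equality cannot hold (one would need a matching upper bound, which does not follow from $\delta$-correctness alone), and what later arguments (in particular the proof of Theorem~\ref{LB-DISJ2}) actually need is a lower bound of the shape $h(\Pi(x),\Pi(y)) \geq (1-2\delta)/\sqrt{2}$, so I will read the claim as an inequality. The route is the standard one: first lower-bound the total variation distance $\Delta(\Pi(x),\Pi(y))$ using the correctness guarantee, then convert to Hellinger distance using Fact~\ref{fac:StatDist-2}.

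First I would observe that the output of $\Pi$ is a deterministic function of its transcript; write $\mathrm{out}(\tau)$ for the decoded output on transcript $\tau$. The $\delta$-error assumption says that on every input $z$ we have $\Pr_{\tau \sim \Pi(z)}[\mathrm{out}(\tau) = f(z)] \geq 1-\delta$. Applying this to both $x$ and $y$, and using $f(x)\neq f(y)$, gives simultaneously
\begin{equation*}
\Pr_{\tau \sim \Pi(x)}[\mathrm{out}(\tau)=f(x)] \;\geq\; 1-\delta \qquad\text{and}\qquad \Pr_{\tau \sim \Pi(y)}[\mathrm{out}(\tau)=f(x)] \;\leq\; \delta.
\end{equation*}

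Taking the event $E=\{\tau : \mathrm{out}(\tau)=f(x)\}$ as a distinguishing set,
\begin{equation*}
\Delta\bigl(\Pi(x),\Pi(y)\bigr) \;\geq\; \bigl|\Pr_{\Pi(x)}[E]-\Pr_{\Pi(y)}[E]\bigr| \;\geq\; (1-\delta)-\delta \;=\; 1-2\delta.
\end{equation*}
Now invoking the right-hand inequality of Fact~\ref{fac:StatDist-2}, namely $\Delta(P,Q) \leq \sqrt{2}\,h(P,Q)$,
\begin{equation*}
h\bigl(\Pi(x),\Pi(y)\bigr) \;\geq\; \frac{\Delta(\Pi(x),\Pi(y))}{\sqrt{2}} \;\geq\; \frac{1-2\delta}{\sqrt{2}},
\end{equation*}
which is the desired bound.

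There is no real obstacle in the argument; every step is a one-line appeal to a standard fact. The only things requiring a little care are picking the correct direction of Fact~\ref{fac:StatDist-2} (we need the upper bound $\Delta \leq \sqrt{2}h$, not the lower bound) and noting that because $\mathrm{out}$ is a deterministic post-processing of the transcript, any statistical separation seen at the output level is inherited by the full transcript distributions $\Pi(x)$ and $\Pi(y)$.
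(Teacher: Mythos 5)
Your proof is correct, and you were right to read the printed equality as a typo: as stated the fact cannot be an equality (nothing in $\delta$-correctness pins down the distance from above), and what the computation in Section~\ref{sec:lower-bounding-disj} actually consumes is a constant lower bound on $h^2(\Pi(x),\Pi(y))$ for constant $\delta<1/2$, which your bound $h\geq(1-2\delta)/\sqrt{2}$ supplies. Note, however, that the paper itself gives no proof of this fact; it is quoted from Bar-Yossef et al.\ \cite{DBLP:journals/jcss/Bar-YossefJKS04}, so there is no in-paper argument to compare against. Relative to the standard proof there, your route is the same in its first half (the distinguishing event $E=\{\mathrm{out}(\tau)=f(x)\}$ gives $\Delta(\Pi(x),\Pi(y))\geq 1-2\delta$, using that the output is a deterministic function of the transcript), but differs in the conversion step: you invoke the crude direction $\Delta\leq\sqrt{2}\,h$ of Fact~\ref{fac:StatDist-2}, which yields $h^2\geq(1-2\delta)^2/2$, whereas the sharper standard argument uses $\Delta\leq\sqrt{1-F^2}$ (with $F=1-h^2$ the Bhattacharyya coefficient) to get $F\leq 2\sqrt{\delta(1-\delta)}$ and hence $h^2\geq 1-2\sqrt{\delta}$, which tends to $1$ as $\delta\to 0$. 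Your weaker constant is perfectly adequate for Theorem~\ref{LB-DISJ2} (it only changes the absolute constant in the $\Omega(k)$ bound), so the only thing you give up is tightness of the constant; what you gain is that the whole proof is a two-line appeal to facts already stated in the paper.
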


The following lemmas will be helpful in our proof. These lemmas are
straight-forward generalization of their two-party analogues.
\begin{lemma}
  \label{lemma:cutpaste}
For any randomized protocol $\Pi$ computing $f: X^k  \rightarrow
\bool$ and for any $x, y \in X^k$ and for some $i$ and $j$, 
\begin{equation}
  \label{eq:16}
  h(\Pi(x_ix_jx_{-i,j}, y_iy_jy_{-i,j}))=h(\Pi(x_iy_jx_{-i,j}, y_ix_jy_{-i,j})). 
\end{equation}
\end{lemma}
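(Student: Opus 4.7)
The plan is to reduce this statement to the $k$-party analogue of the \emph{rectangle property} of private-coin protocols, which in the coordinator model asserts that for every transcript $\tau$ of $\Pi$,
$$\Pr[\Pi(x)=\tau] \;=\; q_C(\tau)\cdot \prod_{l=1}^{k} q_l(\tau, x^l),$$
where $q_l(\tau, x^l)$ is the probability that processor $l$ produces the outgoing messages assigned to it by $\tau$ given its input $x^l$ and the incoming messages recorded in $\tau$, and $q_C(\tau)$ is the analogous contribution of the coordinator. This factorization holds because every party uses only its own private coins, its own input, and the messages it has already received, so the private randomness decouples across parties. The first step is to prove this by a straightforward induction on the rounds of $\Pi$, mirroring the classical two-party argument.

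Next I would pass from Hellinger distance to the Bhattacharyya coefficient via $h^2(P,Q) = 1 - F(P,Q)$ with $F(P,Q) = \sum_\tau \sqrt{P(\tau)\,Q(\tau)}$, so that it suffices to show the two corresponding $F$ values are equal. Plugging in the factorization gives, for any pair of inputs $(a,b)$,
$$\sqrt{\Pr[\Pi(a)=\tau]\,\Pr[\Pi(b)=\tau]} \;=\; q_C(\tau)\cdot\prod_{l=1}^{k}\sqrt{q_l(\tau, a^l)\,q_l(\tau, b^l)}.$$
The crucial observation is that $\sqrt{q_l(\tau, a^l)\,q_l(\tau, b^l)}$ is symmetric in its two arguments, so the summand depends on $(a,b)$ only through the \emph{unordered} pair $\{a^l, b^l\}$ at each coordinate $l$.

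Finally I would compare the two sides coordinate by coordinate. On the left side the input pair $(x_i x_j x_{-i,j},\,y_i y_j y_{-i,j})$ produces the unordered pair $\{x^l, y^l\}$ at every coordinate $l$. On the right side the pair $(x_i y_j x_{-i,j},\,y_i x_j y_{-i,j})$ again produces $\{x^l, y^l\}$ for every $l \notin\{i,j\}$, produces $\{x^i, y^i\}$ at position $i$, and at position $j$ it produces the ordered pair $(y^j, x^j)$ whose underlying unordered pair is still $\{x^j, y^j\}$. Hence the two Bhattacharyya summands agree term by term, the two coefficients agree, and therefore the two Hellinger distances are equal.

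The only nontrivial ingredient is the rectangle property for the coordinator model; everything else is symmetry. Since the rectangle property is a direct generalization of the familiar two-party statement and its proof does not rely on the topology being an edge (only on private randomness and the local nature of message generation), I do not anticipate any real obstacle.
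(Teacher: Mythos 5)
Your proposal is correct and follows essentially the same route as the paper: factor the probability of a transcript into per-party contributions (the $k$-party rectangle property over private coins, which the paper gets by viewing $\Pi$ as deterministic on $(x,\{R_i\}_i)$), rewrite $h^2 = 1 - F$ with the Bhattacharyya coefficient, and observe that each summand depends only on the unordered pair of inputs at every coordinate, so swapping coordinate $j$ between the two inputs leaves $F$ unchanged. The only cosmetic difference is that you carry an explicit coordinator factor $q_C(\tau)$ and propose a round-by-round induction for the factorization, which the paper simply asserts via the combinatorial-rectangle formulation.
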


\begin{proof}
  We will think of the randomized protocol $\Pi$ on input $(x,y)$ as a
  deterministic protocol $\Pi'$ working on $(x,y,\{R_i\}_i)$ where
  $R_i$ is the private random coins of player $i$. The first
  observation is that for $k$-party deterministic protocol, the inputs
  which gives rise to the same transcript $\pi$ form a combinatorial
  rectangle $\mathbf{R}_\pi= S^1_\pi \times \cdots \times S^k_\pi$. So
  we have the following.
  \begin{eqnarray}
    \label{eq:22}
    \Pr_{\{R_i\}_i}[\Pi'[x,\{R_i\}_i] = \pi] &=&
    \Pr_{\{R_i\}_i}[(x,\{R_i\}_i) \in \mathbf{R}_\pi] \cr
    &=& \Pr_{\{R_i\}_i}\left[\bigwedge_i (x_i,R_i) \in S^i_\pi\right] \cr
    &=& \bigwedge_i\Pr_{R_i}[(x_i,R_i) \in S^i_\pi].
  \end{eqnarray}

  Hence it immediately follows that
  \begin{align}
    \label{eq:23}
    & 1- h^2(\Pi(x_ix_jx_{-i,j}, y_iy_jy_{-i,j}))\nonumber \\
    &=
    \sum_\pi\sqrt{\Pr_{\{R_i\}_i}[\Pi(x_ix_jx_{-i,j}=\pi].\Pr_{\{R_i\}_i}[y_iy_jy_{-i,j}=\pi]}\cr
    &= \sum_\pi\bigg[\sqrt{\Pr_{\{R_i\}_i}[(x_{-i,j},R_{-i,j}\in
      \mathbf{R}_\pi^{-i,j}]\Pr_{R_i}[(x_i,R_i) \in
      S^i_\pi]\Pr_{R_j}[(x_j,R_j) \in S^j_\pi]}\times \nonumber \\
    & \ \ \ \ \ \ \ \ \ \ \ \ \ \ \ \ \ \ \ \ \ \ \ \ \ \ \ \ \ \
    \sqrt{\Pr_{\{R_i\}_i}[(x_{-i,j},R_{-i,j}\in
      \mathbf{R}_\pi^{-i,j}]\Pr_{R_i}[(y_i,R_i) \in
      S^i_\pi]\Pr_{R_j}[(y_j,R_j) \in S^j_\pi]}\bigg]\nonumber \\
    &=\sum_\pi\sqrt{\Pr_{\{R_i\}_i}[\Pi(x_iy_jx_{-i,j}=\pi].\Pr_{\{R_i\}_i}[y_ix_jy_{-i,j}=\pi]}\nonumber \\
    &= 1- h^2(\Pi(x_iy_jx_{-i,j}, y_ix_jy_{-i,j})).
  \end{align}
\end{proof}

This property of Hellinger distance is called the \textit{$k$-party
  cut-paste property}. Another property of Hellinger distance which is
required is the following.

\begin{lemma}
  \label{lemma:pythagorian}
  For any randomized protocol $\Pi$ and for any input $x, y \in X^k$
  and for some $i$ and $j$,
  \begin{equation}
    \label{eq:17}
    2h^2(\Pi(x_ix_jx_{-i,j}, y_iy_jy_{-i,j})) \geq h^2(\Pi(x_ix_jx_{-i,j},
    x_iy_jy_{-i,j})) + h^2(\Pi(y_ix_jx_{-i,j}, y_iy_jy_{-i,j})).
  \end{equation}
\end{lemma}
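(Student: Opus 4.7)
The plan is to mimic the rectangle-based argument used in the proof of Lemma~\ref{lemma:cutpaste} and reduce the claimed inequality to a pointwise nonnegativity statement. First I would view the randomized protocol $\Pi$ as a deterministic protocol $\Pi'$ that takes the additional inputs $R_1,\ldots,R_k$ (the private coins of the players). Then, for any fixed transcript $\pi$, the set of tuples $\bigl((x_1,R_1),\ldots,(x_k,R_k)\bigr)$ producing $\pi$ is a combinatorial rectangle $S^1_\pi\times\cdots\times S^k_\pi$. This factorizes $\Pr[\Pi(\cdot)=\pi]$ across coordinates exactly as in Equation~\eqref{eq:22}.

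Using this factorization, for each transcript $\pi$ I would set
\[
A_\pi=\sqrt{\Pr_{R_i}[(x_i,R_i)\in S^i_\pi]},\qquad B_\pi=\sqrt{\Pr_{R_i}[(y_i,R_i)\in S^i_\pi]},
\]
and define $C_\pi,D_\pi$ analogously for coordinate $j$ using $x_j,y_j$, and $E_\pi,F_\pi$ for the remaining coordinates using $x_{-i,j},y_{-i,j}$. Then the three Bhattacharya coefficients appearing in the three Hellinger distances can all be expressed in the same basis as
\begin{align*}
1-h^2\bigl(\Pi(x_ix_jx_{-i,j}),\Pi(y_iy_jy_{-i,j})\bigr)&=\sum_\pi A_\pi B_\pi C_\pi D_\pi E_\pi F_\pi,\\
1-h^2\bigl(\Pi(x_ix_jx_{-i,j}),\Pi(x_iy_jy_{-i,j})\bigr)&=\sum_\pi A_\pi^2\,C_\pi D_\pi E_\pi F_\pi,\\
1-h^2\bigl(\Pi(y_ix_jx_{-i,j}),\Pi(y_iy_jy_{-i,j})\bigr)&=\sum_\pi B_\pi^2\,C_\pi D_\pi E_\pi F_\pi.
\end{align*}

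Substituting these identities into \eqref{eq:17} and simplifying, the desired inequality collapses to
\[
\sum_\pi (A_\pi-B_\pi)^2\,C_\pi D_\pi E_\pi F_\pi\ge 0,
\]
which holds term by term since each factor is nonnegative. In other words, the $k$-party Pythagorean property for Hellinger distance is once again just $(A-B)^2\ge 0$ after the rectangle trick is applied.

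I do not expect a serious obstacle here: the only delicate point is the bookkeeping that ensures all six factors $A_\pi,\ldots,F_\pi$ appear in the Bhattacharya coefficient for $(x_ix_jx_{-i,j},y_iy_jy_{-i,j})$ while only one of $A_\pi,B_\pi$ gets squared (and the other drops out) in each of the two RHS terms. No new ideas beyond the rectangle decomposition of deterministic protocols are needed, and the two-party version of this fact from \cite{DBLP:journals/jcss/Bar-YossefJKS04} generalizes verbatim because the factorization over coordinates does.
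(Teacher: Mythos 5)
Your proof is correct and follows essentially the same route as the paper's: both decompose the transcript probabilities via the $k$-party rectangle property and then reduce the inequality to the pointwise fact that $A_\pi^2+B_\pi^2\ge 2A_\pi B_\pi$ on the $i$th coordinate (the paper phrases this as AM--GM on $\Pr_{R_i}[(x_i,R_i)\in S^i_\pi]+\Pr_{R_i}[(y_i,R_i)\in S^i_\pi]$, which is the same as your $(A_\pi-B_\pi)^2\ge 0$).
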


\begin{proof}
  As before,
  \begin{align}
    \label{eq:24}
    &  (1-h^2(\Pi(x_ix_jx_{-i,j},
    x_iy_jy_{-i,j})))+(1-h^2(\Pi(y_ix_jx_{-i,j}, y_iy_jy_{-i,j})))
    \nonumber \\
    &= \sum_\pi
    \bigg[\sqrt{\Pr_{\{R_i\}_i}[\Pi(x_ix_jx_{-i,j}=\pi].\Pr_{\{R_i\}_i}[x_iy_jy_{-i,j}=\pi]}
      +\nonumber\\
    &\ \ \  \ \ \ \ \ \ \ \ \ \ \ \ \ \ \ \ \ \ \ \ \ \ \ \ \ \ \ \ \
    \ \ \ \ \ \ \ \ \ \ \ \ \   \sqrt{\Pr_{\{R_i\}_i}[\Pi(y_ix_jx_{-i,j}=\pi].\Pr_{\{R_i\}_i}[y_iy_jy_{-i,j}=\pi]}\bigg]
    \nonumber \\
    &=\sum_\pi\bigg[\sqrt{\Pr_{\{R_i\}_i}[(y_{-i,j},R_{i,j}\in
        \mathbf{R}_\pi^{-i,j}]\Pr_{\{R_i\}_i}[(x_{-i,j},R_{i,j}\in
        \mathbf{R}_\pi^{-i,j}]}\times \nonumber\\
    &\ \ \  \ \ \ \ \ \ \ \ \ \sqrt{\Pr_{R_j}[(x_j,R_j) \in
        S^j_\pi]\Pr_{R_j}[(y_j,R_j) \in S^j_\pi]}\left(\Pr_{R_i}[(x_i,R_i) \in
          S^i_\pi] + \Pr_{R_i}[(y_i,R_i) \in
          S^i_\pi]\right)\bigg]\nonumber \\
    &\geq 2\sum_\pi\bigg[\sqrt{\Pr_{\{R_i\}_i}[(y_{-i,j},R_{i,j}\in
        \mathbf{R}_\pi^{-i,j}]\Pr_{\{R_i\}_i}[(x_{-i,j},R_{i,j}\in
        \mathbf{R}_\pi^{-i,j}]}\times\nonumber \\
    &\ \ \  \ \ \ \ \ \ \ \ \ \ \ \ \ \sqrt{\Pr_{R_j}[(x_j,R_j) \in
        S^j_\pi]\Pr_{R_j}[(y_j,R_j) \in S^j_\pi]}\sqrt{\Pr_{R_i}[(x_i,R_i) \in
        S^i_\pi]\Pr_{R_i}[(y_i,R_i) \in
        S^i_\pi]}\bigg] \nonumber \\
    &= 2(1-h^2(\Pi(x_ix_jx_{-i,j}, y_iy_jy_{-i,j}))).
  \end{align}
\end{proof}

Now we show some structural properties of coordinator model described
in terms of Hellinger distance. These are generalizations of analogous
properties shown in \cite{DBLP:conf/focs/BravermanEOPV13}. We believe,
the proofs are simpler than the ones in
\cite{DBLP:conf/focs/BravermanEOPV13}.  We will make use of these
structural properties later in the proof.

\subsubsection{Structural properties}
\label{sec:struct-prop}

First we state a version of \textit{diagonal} lemma for $\M$ and
$\X^i$ which will be useful in our proof. Note that
$\Pi^i(\bar{e}_{i,j}, \bar{e})$ is equivalent to $\Pi^i[1,1,j;\bar{e}]$.

\begin{lemma}
  \label{lemma:diagonal}
For $i \neq j$
\begin{equation}
  \label{eq:18}
  h^2(\Pi^i[0,0,j;\bar{e}], \Pi^i[1,1,z;\bar{e}])  
  \geq \frac{1}{2} h^2(\Pi^i(\bar{e}_{i,j};\bar{e}), \Pi^i(\bar{e}_j;\bar{e})).
\end{equation}
\end{lemma}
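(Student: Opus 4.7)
The plan is to unpack the two distributions in the LHS. Since $\M=0$ makes the first column uniform over $\{0,1\}^k$ independently of $\Z$, the conditional distribution given $\X_i=0$ is uniform over columns with position $i$ fixed to $0$. Hence $\Pi^i[0,0,j;\bar{e}]$ is the uniform mixture
\[
P \;=\; \frac{1}{2^{k-1}}\sum_{y \in \{0,1\}^{k-1}} \Pi^i\bigl(\bar{c}_0(y);\bar{e}\bigr),
\]
where $\bar{c}_0(y)$ is the first column with position $i$ equal to $0$ and the remaining $k-1$ positions specified by $y$. One distinguished component of this mixture is $R := \Pi^i(\bar{e}_{i,j};\bar{e})$, arising when $y$ places $0$ at position $j$ and $1$ everywhere else. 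On the other side, $\M=1$ with $\Z=j$ forces the first column to $\bar{e}_j$, and $\X_i=1$ is automatic since $i\neq j$, so $Q := \Pi^i[1,1,j;\bar{e}]$ is the deterministic $\Pi^i(\bar{e}_j;\bar{e})$. The claim reduces to $h^2(P,Q) \geq \tfrac12\,h^2(R,Q)$, i.e., a lower bound on the mixture-to-deterministic Hellinger distance keyed to one distinguished component.

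The main step is a Pythagorean-type decomposition (Lemma~\ref{lemma:pythagorian}) combined with cut-paste symmetry (Lemma~\ref{lemma:cutpaste}). For each component $\bar{c}_0(y)$ of the mixture, I would compare $\Pi^i(\bar{c}_0(y);\bar{e})$ against $\Pi^i(\bar{e}_j;\bar{e})$ by invoking the Pythagorean inequality on an input pair that differs both at player $i$ and at one other player $p$ (chosen where the two columns disagree); cut-paste supplies the symmetry needed to realign the resulting Hellinger terms with the anchor pair $(\bar{e}_{i,j},\bar{e}_j)$. Averaging these pointwise Pythagorean bounds against the uniform distribution of $y$, together with the identity $h^2(\mathbf{E}_y\Pi^i(\bar{c}_0(y);\bar{e}),Q) = h^2(P,Q)$ after rearrangement via the Bhattacharyya coefficient, should produce $2h^2(P,Q) \geq h^2(R,Q) + (\text{nonnegative})$, which is exactly the claim.

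The chief obstacle is that naive convexity of $h^2$ in one argument runs in the wrong direction: it only upper-bounds the mixture-to-$Q$ Hellinger by the average of the component-to-$Q$ Hellingers, whereas we need a lower bound pinned to one distinguished component $R$. Circumventing this requires exploiting the combinatorial symmetry of $P$ (uniformity over the other processors' positions in the first column) jointly with the Pythagorean inequality, rather than convexity alone. The factor $\tfrac12$ on the right-hand side is the unmistakable signature of the Pythagorean step, and I expect the bulk of the technical work to lie in carefully selecting the second-player pivot $p$ for each $y$ so that the cut-paste realignment lands on the anchor pair $(\bar{e}_{i,j},\bar{e}_j)$.
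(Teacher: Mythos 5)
You have correctly set up the two distributions on the left-hand side, correctly identified $R=\Pi^i(\bar{e}_{i,j};\bar{e})$ as the distinguished component of the mixture $P$, and correctly identified the central obstacle: Jensen's inequality on the Bhattacharyya coefficient gives $h^2(P,Q)\le \mathbf{E}_y\,h^2(\Pi^i(\bar c_0(y);\bar{e}),Q)$, which is the \emph{wrong} direction for pinning $h^2(P,Q)$ below by a single component. But your proposed fix stops at a promissory note: you say symmetry plus Pythagorean plus cut-paste ``should produce'' $2h^2(P,Q)\geq h^2(R,Q)+(\text{nonnegative})$ after ``rearrangement via the Bhattacharyya coefficient,'' without exhibiting the rearrangement. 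As stated, the plan of averaging pointwise Pythagorean bounds over $y$ cannot close the gap, precisely because the quantity you would obtain after averaging is $\mathbf{E}_y\,h^2(\Pi^i(\bar c_0(y);\bar{e}),\cdot)$, and there is no inequality turning that into a lower bound for $h^2(P,Q)$.

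The missing idea, which the paper uses, is not to decompose $P$ into its $2^{k-1}$ components at all. Instead, treat everything other than processor $i$ --- the coordinator together with the other $k-1$ processors --- as a single second party whose ``input'' is the pair $(\M,\Z)$, with the sampling of the other processors' bits absorbed into that super-party's private randomness. This yields a genuine two-party protocol $\hat\Pi$ whose transcript is exactly $\Pi^i$, and it satisfies the rectangle property, so the two-party Pythagorean lemma applies directly. With processor $i$'s input ranging over $\{0,1\}$ and the super-party's input over $\{(0,j),(1,j)\}$, the four leaves are $P=\hat\Pi[0,(0,j)]$, $P'=\hat\Pi[1,(0,j)]$, $R=\hat\Pi[0,(1,j)]$, $Q=\hat\Pi[1,(1,j)]$, and one application of Lemma~\ref{lemma:pythagorian} gives $2h^2(P,Q)\ge h^2(P,P')+h^2(R,Q)\ge h^2(R,Q)$. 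No averaging over $y$, no pivot selection, and no cut-paste realignment is needed; those steps in your outline are attempting to do by hand what the rectangle structure of the bundled two-party protocol gives for free. Your proof, as written, does not reach the statement.
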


The next lemma is, as mentioned in
\cite{DBLP:conf/focs/BravermanEOPV13}, a version of
\textit{global-to-local} property of Hellinger distance in the
following setting.

\begin{lemma}
  \label{lemma:glob2loc}
For $i \neq j$
\begin{equation}
  \label{eq:19}
  h^2(\Pi[i,0,0,z;\bar{e}], \Pi[i,1,0,z;\bar{e}]) = h(\Pi^i[0,0,z;\bar{e}], \Pi^i[1,0,z;\bar{e}]),
\end{equation}
and
\begin{equation}
  \label{eq:20}
  h(\Pi(\bar{e}_{i,j};\bar{e}), \Pi(\bar{e}_i;\bar{e}) = h(\Pi^i(\bar{e}_{i,j};\bar{e}), \Pi^i(\bar{e}_i;\bar{e})).
\end{equation}
\end{lemma}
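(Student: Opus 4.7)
The plan is to reduce both equalities to the following general fact about coordinator-model protocols: if two inputs $x, y \in X^k$ agree in every coordinate except the $i$-th, then $h\bigl(\Pi(x),\Pi(y)\bigr) = h\bigl(\Pi^i(x),\Pi^i(y)\bigr)$. This is the coordinator-model analogue of the familiar two-party ``global-to-local'' property, and it is proved using the same combinatorial rectangle structure that powered Lemmas~\ref{lemma:cutpaste} and \ref{lemma:pythagorian}. Specifically, view the randomized $\Pi$ as a deterministic protocol in the extended input $(x_1,\ldots,x_k,R_1,\ldots,R_k)$, where $R_j$ is player $j$'s private tape (the coordinator may be taken to be deterministic WLOG). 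For every fixed transcript $\pi$, the extended inputs producing $\pi$ form a rectangle $S^1_\pi \times \cdots \times S^k_\pi$, giving
\[
\Pr[\Pi(x)=\pi] \;=\; \prod_{j=1}^{k} p^j_\pi(x_j), \qquad p^j_\pi(x_j) := \Pr_{R_j}\!\bigl[(x_j,R_j)\in S^j_\pi\bigr].
\]

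The crucial structural observation is that player $i$'s view $\pi^i$ alone determines $S^i_\pi$: a pair $(x_i,R_i)$ is consistent with $\pi$ iff it produces the outgoing messages in $\pi^i$ given the incoming messages in $\pi^i$, which depends only on $\pi^i$. Hence $p^i_\pi = p^i_{\pi^i}$, and marginalising the full transcript over extensions of a fixed $\pi^i$ yields
\[
\Pr[\Pi^i(x)=\pi^i] \;=\; p^i_{\pi^i}(x_i)\cdot Q^{-i}(\pi^i,x_{-i}), \qquad Q^{-i}(\pi^i,x_{-i}) := \sum_{\pi\mapsto\pi^i}\,\prod_{j\neq i} p^j_\pi(x_j).
\]
If $x$ and $y$ agree outside coordinate $i$, then $Q^{-i}(\pi^i,x_{-i})=Q^{-i}(\pi^i,y_{-i})$, and the Bhattacharya coefficient for the full transcripts pulls through to the same expression as the coefficient for the views, namely $\sum_{\pi^i} Q^{-i}(\pi^i,x_{-i}) \sqrt{p^i_{\pi^i}(x_i)\,p^i_{\pi^i}(y_i)}$. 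The identity $h^2 = 1-F$ then delivers the desired equality of Hellinger distances.

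To finish, I apply this general claim to each equation. For \eqref{eq:19} (where the right-hand side should read $h^2$, the lone $h$ being a typo), the distributions $\Pi[i,0,0,z;\bar{e}]$ and $\Pi[i,1,0,z;\bar{e}]$ are mixtures that both sample $\X^i_{-i}$ uniformly and differ only by fixing $\X^i_i=0$ vs.\ $\X^i_i=1$; absorbing the sampling of $\X^i_{-i}$ into the private tapes of the other players produces two effective inputs to a modified protocol that differ only at coordinate $i$, so the general claim applies. For \eqref{eq:20}, the inputs $(\bar{e}_{i,j};\bar{e})$ and $(\bar{e}_i;\bar{e})$ differ only in the $j$-th bit of the first column, so the general claim gives the equality with player $j$'s view on the right (I read the $\Pi^i$ as a typo for $\Pi^j$). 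The one delicate step is the mixture-absorption for \eqref{eq:19}: one must verify that averaging over $\X^i_{-i}$ is, as far as the Hellinger distance is concerned, interchangeable with extending the other players' private randomness, which is exactly where the product factorisation of $\Pr[\Pi(x)=\pi]$ over the rectangle does its work.
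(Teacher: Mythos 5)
Your proof is correct and takes essentially the same approach as the paper: both factor the transcript probability into a piece that depends only on player~$i$'s view and input (the paper via conditioning on $\tau_{i\leftarrow},\tau_{i\rightarrow},\tau_{-i}$, you via the rectangle structure $p^i_\pi=p^i_{\pi^i}$), and both then observe that the remaining factor is common to the two inputs and pulls out of the Bhattacharya coefficient. You are also right that the stated lemma contains typos, and your readings (RHS of \eqref{eq:19} should be $h^2$; the distributions in \eqref{eq:20} differ at coordinate $j$, not $i$, so either $\Pi^i$ should be $\Pi^j$ or $\bar e_i$ should be $\bar e_j$ to match the use in \eqref{eq:26}) are consistent with the proof and with how the lemma is applied.
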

The proofs of Lemma \ref{lemma:diagonal} and Lemma
\ref{lemma:glob2loc} are straightforward from the analogous lemmas in
\cite{DBLP:conf/focs/BravermanEOPV13}. We will give a simplified
version of the proofs here.

\begin{proof}[Proof of Lemma \ref{lemma:diagonal}] We have to measure
  the Hellinger distance of the distribution of $\Pi^i$ where the
  inputs in the first coordinate are coming from the distribution
  switched by $\M$ and $\Z$. The first observation that we do is the
  following. In the protocol $\Pi$ the input of the $i$th player is
  fixed. The protocol can be thought of as a two party protocol, where
  the first player is the $i$-th player and the second player consists
  of all the $k-1$ players and the coordinator. The second player,
  given $\M$ and $\Z$, can randomly sample the inputs of all other
  players participating in $\Pi$. The input of the first player, as in
  the case of $i$th player in the protocol $\Pi$ is fixed. Let us call
  this new protocol as $\hat{\Pi}$. It is clear that the Hellinger
  distance in the Equation \eqref{eq:18} remains the same if we
  consider $\hat{\Pi}$ instead of $\Pi$.

  Now, given a two party protocol $\hat{\Pi}$, we can invoke
  Pythagorean lemma (two party version of Lemma
  \ref{lemma:pythagorian}) to imply the following.

  \begin{align}
    \label{eq:21}
2h^2(\Pi[0,0,j;\bar{e}], \Pi[1,1,j;\bar{e}])
    &=2h^2(\hat{\Pi}[0,(0,j);\bar{e}],
    \hat{\Pi}[1,(1,j);\bar{e}])\nonumber \\
    &\geq h^2(\hat{\Pi}[0,(0,j);\bar{e}],
    \hat{\Pi}[1,(0,j);\bar{e}])\nonumber \\
    &\ \ \ \ \ \ \ \ \ \ \ \ \ +h^2(\hat{\Pi}[0,(1,j);\bar{e}],
    \hat{\Pi}[1,(1,j);\bar{e}])\nonumber \\
    &\geq h^2(\hat{\Pi}[0,(1,j);\bar{e}],
    \hat{\Pi}[1,(1,j);\bar{e}])\nonumber \\
    &= h^2(\Pi[\bar{e}_{i,j};\bar{e}],
    \hat{\Pi}[\bar{e}_j;\bar{e}]).\nonumber
  \end{align}
\end{proof}

\begin{proof}[Proof of Lemma \ref{lemma:glob2loc}]
  We will make the following observations. Firstly, given any
  transcript $\tau$ for $\Pi$ and a player $i$ we can divide $\tau$ in
  three parts: $\tau_{i\leftarrow}$ which is the part of the
  transcript where the coordinator sends message to player $i$,
  $\tau_{i\rightarrow}$ which is the part of the transcript where the
  player $i$ sends message to the coordinator and $\tau_{-i}$ where
  the other players and the coordinator sends message to each other.

  Secondly, the following is easy to see. When the input of the player
  $i$ is fixed (say to $(01)$) as in the case of the distribution we
  are interested in Eq. (\ref{eq:19}),
  \begin{equation}
    \label{eq:28}
  \Pr_{R,X}[\Pi=\tau]=\Pr_{R_i}[\Pi_{i\rightarrow}=\tau_{i\rightarrow}\
  |\
  X_i=(01),\Pi_{i\leftarrow}=\tau_{i\leftarrow}].\Pr_{\underset{R_{-i}}{X_{-i}}}[\Pi_{i\leftarrow}=\tau_{i\leftarrow},\Pi_{-i}=\tau_{-i}\
  |\ \Pi_{i\rightarrow}=\tau_{i\rightarrow}].
  \end{equation}

  Similarly,
  \begin{equation}
    \label{eq:29}
    \Pr_{R,X}[\Pi^i=\tau^i]=\Pr_{R_i}[\Pi^i_{i\rightarrow}=\tau^i_{i\rightarrow}\
  |\
  X_i=(01),\Pi^i_{i\leftarrow}=\tau^i_{i\leftarrow}].\Pr_{R_{-i},X_{-i}}[\Pi^i_{i\leftarrow}=\tau^i_{i\leftarrow}\
|\ \Pi^i_{i\rightarrow}=\tau^i_{i\rightarrow}].
  \end{equation}
  So,
  \begin{align}
    \label{eq:27}
     1&- h^2(\Pi^i[0,0,z;\bar{e}], \Pi^i[1,0,z;\bar{e}])\nonumber  \\
     &=\sum_{\tau^i}\bigg[\sqrt{\Pr_{R_i}[\Pi^i_{i\rightarrow}=\tau^i_{i\rightarrow}\
         |\ 
         X_i=(01),
\Pi^i_{i\leftarrow}=\tau^i_{i\leftarrow}]
\Pr_{R_i}[\Pi^i_{i\rightarrow}=\tau^i_{i\rightarrow}\
        |\ X_i=(11),\Pi^i_{i\leftarrow}=\tau^i_{i\leftarrow}]}\nonumber\\
     &\ \ \ \  \ \ \ \ \   \Pr_{R_{-i},X_{-i}}[\Pi^i_{i\leftarrow}=\tau^i_{i\leftarrow}\
 |\ \Pi^i_{i\rightarrow}=\tau^i_{i\rightarrow}]\bigg]\nonumber\\
 &=\sum_{\tau^i}\bigg[\sqrt{\Pr_{R_i}[\Pi^i_{i\rightarrow}=\tau^i_{i\rightarrow}\
         |\ 
         X_i=(01),
\Pi^i_{i\leftarrow}=\tau^i_{i\leftarrow}]
\Pr_{R_i}[\Pi^i_{i\rightarrow}=\tau^i_{i\rightarrow}\
        |\ X_i=(11),\Pi^i_{i\leftarrow}=\tau^i_{i\leftarrow}]}\nonumber\\
     &\ \ \ \  \ \ \ \ \  \sum_{\tau:\tau|_{i}=\tau^i} \Pr_{R_{-i},X_{-i}}[\Pi_{i\leftarrow}=\tau_{i\leftarrow},\Pi_{-i}=\tau_{-i}\
 |\ \Pi_{i\rightarrow}=\tau_{i\rightarrow}]\bigg]\nonumber\\
 &=\sum_{\tau}\bigg[\sqrt{\Pr_{R_i}[\Pi_{i\rightarrow}=\tau_{i\rightarrow}\
         |\ 
         X_i=(01),
\Pi_{i\leftarrow}=\tau_{i\leftarrow}]
\Pr_{R_i}[\Pi_{i\rightarrow}=\tau_{i\rightarrow}\
        |\ X_i=(11),\Pi_{i\leftarrow}=\tau_{i\leftarrow}]}\nonumber\\
     &\ \ \ \  \ \ \ \ \   \Pr_{R_{-i},X_{-i}}[\Pi_{i\leftarrow}=\tau_{i\leftarrow},\Pi_{-i}=\tau_{-i}\
 |\ \Pi_{i\rightarrow}=\tau_{i\rightarrow}]\bigg]\nonumber\\
&=1-h^2(\Pi[i,0,0,z;\bar{e}], \Pi[i,1,0,z;\bar{e}]).\nonumber  
 \end{align}
\eqref{eq:20} can be proved in the similar way.
\end{proof}

\subsubsection{Information complexity of $\Disj$}
\label{sec:inform-compl-disj}

Now we are ready to prove the partial information cost of $\Disj$ is
$\Omega(k)$. We consider processor $i$ and fix a value $j \neq i$.



 \begin{claim}[\cite{DBLP:conf/focs/BravermanEOPV13}]
\label{claim}
   \begin{align}
    (1)&\I[\M_{-\W}:\Pi^i\ |\ \X^i,\Z=j,\W=2] \geq \frac{2}{3}
     h^2(\Pi^i[1,0,j;\bar{e}],\Pi^i[1,1,j;\bar{e}]),\\
    (2)&\I[\X^i_{-\W}:\Pi^i\ |\ \M, \Z=j,\W=2] \geq \frac{2}{3}
     h^2(\Pi^i[0,0,j;\bar{e}],\Pi^i[1,0,j;\bar{e}]),\\
     (3)&\I[\M_{-\W}:\Pi^i\ |\ \X^i,\Z=j,\W=1] \geq \frac{2}{3}
     h^2(\Pi^i[\bar{e};1,0,j],\Pi^i[\bar{e};1,1,j]),\\
    (4)&\I[\X^i_{-\W}:\Pi^i\ |\ \M, \Z=j,\W=1] \geq \frac{2}{3}
     h^2(\Pi^i[\bar{e};0,0,j],\Pi^i[\bar{e};1,0,j]).
   \end{align}
 \end{claim}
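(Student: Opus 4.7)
The four bounds have identical structure, so my plan is to prove (1) and (2) in detail and invoke the symmetry between the two columns of $\Disj$ to get (3) and (4). The strategy is to decompose each conditional mutual information into a convex combination over the outer conditioning variable, identify the atoms on which the conditional law of the inner variable degenerates to a point mass (and therefore contribute zero information), and on the surviving atom apply Fact \ref{fac:mutInfo-HellDist} directly.

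The first thing I would do is write down the joint law of $(\X^i_1, \M_1)$ under $\zeta_2$ conditioned on $\Z_1 = j$ and $\W = 2$. Since $j \ne i$ by the hypothesis fixed at the outset of this subsection, processor $i$ is not the distinguished zero-slot in column $1$: under $\M_1 = 0$ its bit is uniform on $\{0,1\}$, while under $\M_1 = 1$ its bit is forced to $1$. Using $\Pr[\M_1 = 0] = 2/3$, the three atoms $(0,0), (1,0), (1,1)$ each receive mass $1/3$, and $(0,1)$ is impossible. This little table drives everything.

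For (1), conditioning on the value of $\X^i_1$ splits the left hand side into two pieces with weights $1/3$ and $2/3$. The piece at $\X^i_1 = 0$ vanishes because $\M_1$ is a point mass at $0$ there. The conditional law of $\M_1$ given $\X^i_1 = 1$ is uniform on $\{0,1\}$, so Fact \ref{fac:mutInfo-HellDist} lower bounds the surviving mutual information by $h^2(\Pi^i[1,0,j;\bar{e}], \Pi^i[1,1,j;\bar{e}])$, yielding exactly the factor $2/3$ in the claim. For (2) I would instead split on $\M_1$: the term at $\M_1 = 1$ vanishes because $\X^i_1$ is then a point mass at $1$, whereas the term at $\M_1 = 0$ carries weight $2/3$ and leaves $\X^i_1$ uniform on $\{0,1\}$. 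A second application of Fact \ref{fac:mutInfo-HellDist} completes the argument. Bounds (3) and (4) follow from the identical argument after interchanging the roles of columns $1$ and $2$, since $\zeta_2$ is symmetric in the two columns up to the choice of $\W$.

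I do not anticipate any real obstacle: the entire content is the bookkeeping of which atoms of the $(\X^i_1, \M_1)$-marginal carry mass under the given conditioning, together with the observation that on the unique non-degenerate atom the surviving variable is uniform binary so that Fact \ref{fac:mutInfo-HellDist} applies verbatim. The assumption $i \ne j$ is used precisely to guarantee that the degenerate atoms are genuinely degenerate, and it is also what produces the factor $2/3$ throughout the claim.
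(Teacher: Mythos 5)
Your proposal is correct and takes essentially the same route as the paper's own proof sketch: both decompose the conditional mutual information over the outer conditioning variable, observe that the degenerate atom contributes nothing while the non-degenerate atom carries weight $2/3$ and leaves the inner variable uniform on two values, and then apply Fact~\ref{fac:mutInfo-HellDist}, with (3) and (4) following by the column symmetry of $\zeta_2$. Your write-up merely makes the atom bookkeeping (the masses $1/3,1/3,1/3$ on $(0,0),(1,0),(1,1)$) more explicit than the paper does.
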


\begin{proof}[Proof sketch.]
  Note that $\M$ takes value $0$ with probability $2/3$ and $1$ with
  probability $1/3$. This makes $\M_{-\W}$ a completely unbiased
  random variable given $\X^i=(11)$ and $\Z=j$. This lets us make the
  following assertion from the property of Hellinger distance
  (c.f. Fact \ref{fac:mutInfo-HellDist}).
\begin{equation}
  \label{eq:35}
  \I[\M_{-\W}:\Pi^i\ |\ \X^i=(11),\Z=j,\W=2] \geq 
     h^2(\Pi^i[1,0,j;\bar{e}],\Pi^i[1,1,j;\bar{e}]).
\end{equation}

Also, given $\M=0$ and $\Z=j$, $\X^i$, for any $i \neq j$, is a random
variable taking value between $(01)$ and $(11)$ uniformly. This lets
us conclude the following using Fact \ref{fac:mutInfo-HellDist}.

\begin{equation}
  \label{eq:36}
  \I[\X^i_{-\W}:\Pi^i\ |\ \M=0,\Z=j,\W=2] \geq 
     h^2(\Pi^i[1,0,j;\bar{e}],\Pi^i[0,0,j;\bar{e}]).
\end{equation}

Now, it can be checked that $\Pr[\X^i_{-\W}=(11)\ |\ \Z=j,\W=2] = 2/3$ and
by our construction $\Pr[\M=0\ |\ \Z=j, \W=2] = 2/3$. Combining these
facts with Equation \eqref{eq:35} and \eqref{eq:36} we can prove the
claim. Other two cases can be proved similarly.

\end{proof}

\noindent
Using Cauchy-Schwarz and triangle inequality, we can
write the following.
\begin{align}
  \I&[\M_{-W}:\Pi^i\ |\ \X^i, \Z=j, \W] + \I[\X^i_{-\W};\Pi^i\ |\
  \M,\Z=j,\W]\nonumber \\
  &\geq \frac{1}{3}\big[h^2(\Pi^i[1,1,j;\bar{e}],\Pi^i[0,0,j;\bar{e}])
  + h^2(\Pi^i[\bar{e};1,1,j],\Pi^i[\bar{e};0,0,j])\big]\nonumber \\
  \intertext{Using Lemma \ref{lemma:diagonal},}
 & \geq \frac{1}{6}\big[h^2(\Pi^i(\bar{e}_{i,j}.\bar{e}),\Pi^i(\bar{e}_j.\bar{e}))
  + h^2(\Pi^i(\bar{e}\bar{e}_{i,j}),\Pi^i(\bar{e}\bar{e}_j)\big].\nonumber
\end{align}
Using Lemma \ref{lemma:glob2loc} we get,
\begin{align}
  \label{eq:26}
  \sum_i&\I[\M_{-\W}:\Pi^i\ |\ \X^i, \Z, \W] + \I[\X^i_{-\W};\Pi^i\ |\
  \M,\Z,\W]
  \nonumber \\
  &\geq \frac{1}{6k}\sum_i\sum_{j:i \neq
    j}\Big[h^2(\Pi(\bar{e}_{i,j}.\bar{e}),\Pi(\bar{e}_j.\bar{e}))
  + h^2(\Pi(\bar{e}\bar{e}_{i,j}),\Pi(\bar{e}\bar{e}_j))\Big]\nonumber \\
  \intertext{By recounting the double summation,} \nonumber &=
  \frac{1}{12k} \sum_{i \neq
    j}\Big[[h^2(\Pi(\bar{e}_{i,j}.\bar{e}),\Pi(\bar{e}_j.\bar{e}))+h^2(\Pi(\bar{e}_{i,j}.\bar{e}),\Pi(\bar{e}_i.\bar{e}))]
  \nonumber \\
  &\ \ \ \ \ \ \ \ \ \ \ \ \ \ \ \ \ \ +
  [h^2(\Pi(\bar{e}\bar{e}_{i,j}),\Pi(\bar{e}\bar{e}_j))+h^2(\Pi(\bar{e}\bar{e}_{i,j}),\Pi(\bar{e}\bar{e}_i))]\Big]\nonumber
  \\
  \intertext{Using Cauchy-Schwarz \& triangle inequality,} &\geq
  \frac{1}{24k} \sum_{i \neq
    j}\Big[[h^2(\Pi(\bar{e}_{i}.\bar{e}),\Pi(\bar{e}_j.\bar{e}))] +
  [h^2(\Pi(\bar{e}\bar{e}_i),\Pi(\bar{e}\bar{e}_j))]\Big] \nonumber\\
  &= \frac{1}{24k} \sum_{i \neq
    j}\Big[[h^2(\Pi(\bar{e}.\bar{e}),\Pi(\bar{e}_{i,j}.\bar{e}))] +
  [h^2(\Pi(\bar{e}\bar{e}),\Pi(\bar{e}\bar{e}_{i,j}))]\Big]\tag*{\text{[Lemma~\ref{lemma:cutpaste}]}}\nonumber \\
  &\geq \frac{1}{48k} \sum_{i \neq
    j}[h^2(\Pi(\bar{e}.\bar{e}_{i,j}),\Pi(\bar{e}_{i,j}.\bar{e}))]\tag*{\text{[Cauchy-Schwarz \& triangle inequality]}}\nonumber \\
  &\geq \frac{1}{96k} \sum_{i \neq
    j}\Big[[h^2(\Pi(\bar{e}.\bar{e}_{i,j}),\Pi(\bar{e}_j.\bar{e}_i))+h^2(\Pi(\bar{e}_{i,j}\bar{e}),\Pi(\bar{e}_i.\bar{e}_j))
  ]\Big]\tag*{\text{[Lemma \ref{lemma:pythagorian}]}}\nonumber \\
  &= \frac{k-1}{384}(1-\delta)^2 \tag*{\text{[Fact \ref{fac:StatDist-1}]}}\nonumber \\
  &= \Omega(k).
\end{align}

We can write the penultimate equality because $\bar{e}.\bar{e}_{i,j}$
and $\bar{e}_{i,j}\bar{e}$ gives output $1$ in $\Disj$ but
$\bar{e}_j.\bar{e}_{i}$ and $\bar{e}_i.\bar{e}_{j}$ gives $0$.


\section{Information cost \& communication}
\label{sec:corr-inform-meas}
In this section we will show the information complexity is
right measure to lower bound by showing that the randomized
communication complexity of any function $f$ is lower bounded by the
switched information complexity of $f$.

\begin{theorem}
  \label{thm:info-meas}
For any distribution $\mu$ over the inputs,
\begin{equation}
  \label{eq:39}
  R_\epsilon(\tribes) = \Omega(\mathsf{IC}_\mu(\tribes)).
\end{equation}
\end{theorem}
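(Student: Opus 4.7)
The plan is to establish the standard inequality that communication cost upper-bounds information cost in this model, from which the theorem follows immediately: for any $\epsilon$-error protocol $\Pi$ for $\tribes$, I will show $\mathsf{IC}_\mu(\Pi) \leq 2\,|\Pi|$, where $|\Pi|$ denotes the worst-case total communication. Taking the infimum over all $\epsilon$-error protocols then yields $R_\epsilon(\tribes) \geq \tfrac{1}{2}\,\mathsf{IC}_\mu(\tribes)$.

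For each processor $i \in [k]$, let $|\Pi^i|$ denote the worst-case length of $\Pi^i(\bar{\X})$, i.e.\ the portion of the transcript consisting of messages exchanged between processor $i$ and the coordinator. The first observation I would use is the elementary entropy bound: since $\Pi^i(\bar{\X})$ is supported on binary strings of length at most $|\Pi^i|$, we have $H\bigl(\Pi^i(\bar{\X})\bigr) \leq |\Pi^i|$, and conditioning only decreases entropy. Combined with the fact that mutual information is bounded by the entropy of either argument, this gives
\begin{align*}
\underset{\mu}{\I}\bigl[\bar{\X}^i : \Pi^i(\bar{\X}) \bigm| \bar{\M},\bar{\Z},\bar{\W}\bigr] &\leq H\bigl(\Pi^i(\bar{\X}) \bigm| \bar{\M},\bar{\Z},\bar{\W}\bigr) \leq |\Pi^i|, \\
\underset{\mu}{\I}\bigl[\bar{\M} : \Pi^i(\bar{\X}) \bigm| \bar{\X}^i,\bar{\Z},\bar{\W}\bigr] &\leq H\bigl(\Pi^i(\bar{\X}) \bigm| \bar{\X}^i,\bar{\Z},\bar{\W}\bigr) \leq |\Pi^i|.
\end{align*}
Summing the two terms for each $i$ and then over $i \in [k]$ yields $\mathsf{IC}_\mu(\Pi) \leq 2 \sum_{i \in [k]} |\Pi^i|$.

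The final ingredient is a structural property of the coordinator model: every bit of communication travels along exactly one of the $k$ private channels (between the coordinator and some single processor $i$), and hence appears in the view $\Pi^i$ of exactly one processor. Therefore $\sum_{i \in [k]} |\Pi^i| = |\Pi|$, which delivers $\mathsf{IC}_\mu(\Pi) \leq 2\,|\Pi|$. Since $\Pi$ was an arbitrary $\epsilon$-error protocol and the information complexity of $\tribes$ is the infimum of $\mathsf{IC}_\mu(\Pi)$ over all such protocols, we conclude $R_\epsilon(\tribes) = \Omega(\mathsf{IC}_\mu(\tribes))$.

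The argument is essentially routine once the correct accounting is in place; the only step requiring care is the pairing between the two mutual information terms for processor $i$ and the single channel $\Pi^i$ that carries processor $i$'s entire communication. This pairing is exactly what makes the sum $\sum_i |\Pi^i|$ telescope to the total communication cost $|\Pi|$ rather than overcounting, and it is the one place where the non-broadcast, point-to-point nature of the model is used.
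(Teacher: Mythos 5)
Your argument has a genuine gap at the final step, and it is exactly the subtlety the paper flags and works around. You define $|\Pi^i|$ as the \emph{worst-case} length of processor $i$'s view, and then assert $\sum_{i\in[k]} |\Pi^i| = |\Pi|$ because each bit of the transcript appears in exactly one processor's view. That accounting identity holds only \emph{per input}: for each fixed $x$ you indeed have $\sum_i |\Pi^i(x)| = |\Pi(x)|$. But once you take the maximum over $x$ separately for each $i$, you are comparing $\sum_i \max_x |\Pi^i(x)|$ with $\max_x \sum_i |\Pi^i(x)|$, and the former can exceed the latter by as much as a factor of $k$. (Example: a protocol in which, depending on the input, some single processor does all the talking; then each $|\Pi^i|$ equals the full communication budget, so $\sum_i |\Pi^i| = k\,|\Pi|$.) The paper explicitly calls this out: ``for each player the biggest size of his view of transcript can occur for different $X$. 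Hence we cannot upper bound the switched information complexity by $|\Pi|$.''

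The paper's fix is to replace worst-case transcript length with \emph{expected} length, so that the sum over $i$ can be pulled inside the expectation by linearity: $\sum_i \mathbb{E}(|\Pi^i|) = \mathbb{E}\bigl(\sum_i |\Pi^i|\bigr) = \mathbb{E}(|\Pi|) \le \max_x |\Pi(x)|$. The price of moving to expected lengths is that the naive bound $\h(\Pi^i) \le \text{length}$ no longer holds for an arbitrary variable-length encoding (a variable whose codewords are not prefix-free can have entropy exceeding its expected length). The paper therefore appends a delimiter symbol to make the transcript an instantaneous (prefix-free) $3$-ary code and invokes the source-coding bound (Lemma~\ref{lemma:inst-code}) to get $\h(\Pi^i) \le (\log 3)\,\mathbb{E}(|\Pi^i|)$. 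Your proof omits both of these ingredients: you would need either the instantaneous-code lemma together with linearity of expectation, or some other device, to avoid the sum-of-maxima overcount. As written, the proposal only establishes $R_\epsilon = \Omega\bigl(\tfrac{1}{k}\,\mathsf{IC}_\mu\bigr)$, which would lose the factor of $k$ you are trying to prove.
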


\begin{proof}

Let us assume the random variables $(\X,\Z)$ is distributed according to
$\mu$ and the marginal distribution on $\X$ is $\nu$. Note that
$\I_\mu[\X;\Y\ |\ \Z] \leq \h_\mu(\X\ |\ \Z) \leq \h_\nu(\X)$. Now consider
any $\epsilon$-error protocol $\Pi$ for $\tribes$. We can write the
following.

 \begin{align}
   \label{eq:11}
   \I[\X^i:\Pi^i(\X)\ |\ \M,\Z] &\leq \h(\Pi^i\ |\ \M,\Z) \leq \h(\Pi^i),\\
   \intertext{and}
   \I[\M:\Pi^i(\X)\ |\ \X^i,\Z] &\leq \h(\Pi^i\ |\ \X^i,\Z) \leq \h(\Pi^i).
 \end{align}

Now trivially $\h(\Pi^i)$ upper bounded by the biggest size of
$\Pi^i$ (Note that, $\Pi^i$ is function of the random variable $X$)
and thus the Equation \eqref{eq:11} can be upper bounded by the
biggest size of $\Pi^i$. But for each player the biggest size of his
view of transcript can occur for different $X$. Hence we cannot upper
bound the switched information complexity by the $|\Pi|$.

Instead, we use the following fact from information theory.

\begin{lemma}[Theorem 5.3.1 in \cite{DBLP:books/daglib/0016881}]
\label{lemma:inst-code}
  The expected length $L$ of any instantaneous $q$-ary code for a
  random variable $\X$ satisfies the following inequality.
  \begin{equation}
    \label{eq:37}
    L \geq \frac{1}{\log q} \h(\X).
  \end{equation}
\end{lemma}

We can make the transcript instantaneous by introducing a special
delimiter . That still keeps the alphabet size constant. Hence we can
write the following.

 \begin{align}
   \label{eq:14}
   \I[\X^i:\Pi^i(\X)\ |\ \M,\Z] &\leq \h(\Pi^i\ |\ \M,\Z) \leq \h(\Pi^i)
   \leq \log 3.\mathbb{E}(|\Pi^i|)),\\
   \intertext{and} \I[\M:\Pi^i(\X)\ |\ \X^i,\Z] &\leq \h(\Pi^i\ |\ \X^i,\Z)
   \leq \h(\Pi^i) \leq \log 3.\mathbb{E}(|\Pi^i|).
 \end{align}

Now we are in good shape. We will complete the proof of the lemma by
the following set of equations.

\begin{align}
  \label{eq:38}
  \sum_{i \in [k]} (\I[\X^i:\Pi^i(\X)\ |\ \M,\Z] + \I[\M;\Pi^i(\X)\ |\
  \X^i,\Z] ) &\leq 2\log 3\sum_{i \in [k]}\mathbb{E}(|\Pi^i|) \nonumber \\ 
  &= 2\log 3 \mathbb{E}_x(\sum_{i \in [k]}|\Pi^i|)\tag*{\text{[Linearity of expectation]}}\nonumber \\
  &= 2\log 3\mathbb{E}_x(|\Pi|)\nonumber \\
  &= O(\max_x\{|\Pi(x)|\}). 
\end{align}

The worst case transcript size upper bounded by the randomized
communication complexity of $\tribes$. This proves the
theorem.

\end{proof}

Combining Theorem \ref{thm:info-meas}, \ref{thm:direct-sum1}, \ref{thm:moddirectsum} and \ref{LB-DISJ2}, we can prove Theorem \ref{thm:tribes}.

\bibliography{tribes}

\end{document}